\newif\ifproofs
\newif \ifonlineversion
\renewcommand{\ALG@name}{Algo.}
\pgfplotsset{compat=1.11}
\newcommand{\norm}[1]{\left\Vert #1\right\Vert}
\newcommand{\Proj}{\mathrm{P}}
\newcommand{\txt}{\textstyle}
\newcommand{\x}{x}
\newcommand{\xx}{\bm{x}}
\newcommand{\yy}{\bm{y}}
\newcommand{\zz}{\bm{z}}
\newcommand{\hf}{\hat{f}}
\newcommand{\g}{\mathbf{g}}
\newcommand{\bh}{\mathbf{h}}
\newcommand{\xag}{X}
\newcommand{\xxag}{\bm{\xag}}
\newcommand{\bxxag}{\bar{\bm{\xag}}}
\newcommand{\bsxxag}{\bar{\bm{\xag}}^*}
\newcommand{\yag}{Y}
\newcommand{\byag}{\bar{\yag}}
\newcommand{\byyag}{\bar{\bm{\yag}}}
\newcommand{\zag}{Z}
\newcommand{\bzzag}{\bar{\bm{\zag}}}
\newcommand{\rit}{\mathbb{R}}
\newcommand{\nit}{\mathbb{N}}
\newcommand{\X}{\mathcal{X}}
\newcommand{\FX}{\bm{\X}}
\newcommand{\FXA}{\FX_{\hspace{-2pt}\Aset}}
\newcommand{\Sxag}{\overline{\X}} %
\newcommand{\Y}{\mathcal{Y}}
\newcommand{\Aset}{\mathcal{A}}
\newcommand{\T}{\mathcal{T}}
\newcommand{\I}{\mathcal{I}}
\newcommand{\G}{\mathcal{G}}
\newcommand{\GA}{\mathcal{G}(\Aset)}
\newcommand{\M}{\mathcal{M}}
\newcommand{\N}{\mathcal{N}}
\newcommand{\ww}{\bm{w}}
\newcommand{\cc}{\bm{c}} %
\newcommand{\bv}{\bm{v}}%
\newcommand{\diamX}{R} %
\renewcommand{\th}{\theta} %
\renewcommand{\t}{t}
\newcommand{\n}{n}
\newcommand{\m}{m}
\newcommand{\nt}{_{\n,\t}}
\newcommand{\hxx}{\hat{\xx}}
\newcommand{\hxxag}{\hat{\xxag}}
\newcommand{\sxx}{\xx^*}  %
\newcommand{\sxxag}{\xxag^*}
\newcommand{\eqd}{ \mathrel{\ensurestackMath{\stackon[-1.5pt]{}{\overset{\smash{\scriptscriptstyle{\mathrm{def}}}}{=}}}}}
\newcommand{\dth}{\,\mathrm{d}\th}
\newcommand{\epopset}{^{\popset}}
\newcommand{\dset}{\delta} %
\newcommand{\mdset}{\overline{\delta}} %
\newcommand{\duti}{d} %
\newcommand{\mduti}{\overline{\duti}} %
\newcommand{\stgccvut}{{\alpha}} %
\newcommand{\Amat}{\bm{A}}
\newcommand{\bb}{\bm{b}}
\newcommand{\rlt}{\operatorname{ri}\,}
\newcommand{\rbd}{\operatorname{rbd}\,}
\newcommand{\aff}{\operatorname{aff}\,}
\newcommand{\bpsi}{\overline{\psi}}
\newcommand{\tG}{\tilde{\G}}
\newcommand{\pidx}{i} %
\newcommand{\popset}{\I} %
\newcommand{\popcard}{p} %
\newcommand{\hh}{\hspace{-2pt}}
\newcommand{\llam}{\bm{\lambda}}
\newcommand{\ux}{\underline{\x}}
\newcommand{\ox}{\overline{\x}}
\newcommand{\uux}{\bm{\ux}}
\newcommand{\oox}{\bm{\ox}}
\newcommand{\tr}{\top}
\newcommand{\paramvec}{\bm{v}}
\newcommand{\corProdGradHf}{\hat{H}} 
\newcommand{\corProdGradf}{H} 
\newtheorem{theorem}{Theorem}
\newtheorem{corollary}{Corollary}
\newtheorem{proposition}{Proposition}
\newtheorem{lemma}{Lemma}
\newtheorem{remark}{Remark}
\newtheorem{assumption}{Assumption}
\newtheorem{example}{Example}
\newtheorem{definition}{Definition}%
\Crefname{corollary}{Cor.}{Cors.}
\Crefname{equation}{Eq.}{Eqs.}
\Crefname{figure}{Fig.}{Figs.}
\Crefname{tabular}{Tab.}{Tabs.}
\Crefname{table}{Tab.}{Tabs.}
\Crefname{theorem}{Thm.}{Thms.}
\Crefname{definition}{Def.}{Defs.}
\Crefname{section}{Sec.}{Secs.}
\Crefname{proposition}{Prop.}{Props.}
\Crefname{assumption}{Asm.}{Asms.}
\Crefname{example}{Ex.}{Exs.}
\Crefname{appsec}{Appendix}{Appendices}
\begin{document}

\title{\LARGE \bf
Efficient Estimation of Equilibria in \\ Large Aggregative Games with  Coupling Constraints
}

\author{Paulin Jacquot, Cheng Wan, Olivier Beaude, Nadia Oudjane \thanks{Paulin Jacquot is with EDF R\&D (OSIRIS), Inria and \'Ecole polytechnique, CNRS, Palaiseau, France.} \thanks{\texttt{paulin.jacquot@polytechnique.edu}} \thanks{Cheng Wan, Olivier Beaude and Nadia Oudjane are with EDF R\&D (OSIRIS), Palaiseau, France (\texttt{\{cheng.wan, olivier.beaude,  nadia.oudjane\}@edf.fr}).}
\thanks{
This work was partially supported by the PGMO-ICODE  project ``Jeux de pilotage de flexibilit\'es de consommation \'electrique : dynamique et aspect composite". }
}

\maketitle

\begin{abstract} Aggregative games have many industrial applications, and computing an equilibrium in those games is challenging when the number of players is large. 
In the framework  of  atomic aggregative games with coupling constraints, %
 we show that variational Nash equilibria %
 of a large aggregative game can be approximated by a Wardrop equilibrium of an auxiliary population game of smaller dimension. Each population of this auxiliary game corresponds to a group of atomic players of the initial large game. 
 This approach enables an efficient computation of an approximated equilibrium, as the variational inequality characterizing the Wardrop equilibrium is of smaller dimension than the initial one. 
 This is illustrated on an example in the smart grid context.
\end{abstract}

\begin{IEEEkeywords} Aggregative Game - Variational Nash Equilibrium - Variational Inequalities - Population Game
\end{IEEEkeywords}

\section{Introduction}
\paragraph{Motivation}
Aggregative games \cite{jensen2010aggregative} form a class of noncooperative games \cite{nisan2007algorithmic} in which each agent's objective is a function of her own action and the aggregate actions of all the agents.
These games find  practical applications---in particular for the subclass of splittable congestion games \cite{orda1993competitive}, in which resources are shared among agents, with  each resource having a cost function of the aggregate load onto it---in various fields such as traffic management \cite{ziegelmeyer2008road}, communications \cite{scutari2012monotone,altman2006survey} and electrical systems \cite{mohsenian2010autonomous,PaulinTSG17}.

The notion of Nash Equilibrium (NE) \cite{nash1950equilibrium} has emerged as the central solution concept in game theory.
However, the computation of an NE is considered a challenging problem: indeed, recent works have proved the theoretical complexity of the problem (PPAD-completeness  \cite{fabrikant2004complexity},%
\cite{klimm2018parametrized}). %
In continuous games, NEs can be characterized as solutions of Variational Inequalities (VI) \cite{facchinei2007finite}, a characterization we adopt throughout this work. 
 The efficiency of NEs computation depends on the dimension of these VIs, given by the number of players and of constraints.    
In some applications (transport, energy, etc), one may consider several thousands of heterogeneous agents. At this scale, computing an NE can be intractable. 
In the case where the game model involves coupling constraints between agents' actions, the concept generalizing the NE is simply referred to as Generalized Nash Equilibria (GNE) \cite{harker1991generalized}, and can be even harder to compute.
The consideration of coupling constraints has emerged from applications: for instance to model the capacity of roads and communication channels or, in energy, to model the maximal variations of  production plants  \cite{carrion2006computationally}. 
Estimating an NE as the probable outcome of the model is of main importance for the  operator of a system: for instance, in order to control some physical or managerial parameters and optimize the system (e.g. in  bi-level programming \cite{ColsonMS2007bilevel}). 

In this paper, we consider aggregative games with a finite but \emph{large} number of players. We propose a method to compute an approximation of Variational Nash Equilibria (VNEs) \cite{harker1991generalized}, a subset of GNEs in the presence of coupling constraints. 
The main idea of the proposed method is to reduce the dimension of the VIs characterizing VNEs:  players are divided into groups with similar characteristics. Then, each group is replaced by a homogeneous population of nonatomic (i.e. infinitesimal) players. 
Last, we compute a Variational Wardrop Equilibrium (VWE) \cite{wardrop1952some} in the obtained approximating nonatomic population game. %
The quality of the NE estimation depends on how well the characteristics (action sets and cost functions) of each homogeneous population approximate those of the atomic players it replaces.  

\paragraph{Related works}

The relation between NEs and WEs has been studied in different contexts:  \cite{haurie1985relationship} consider network congestion game with each origin-destination pair having $n$ players and show the convergence of NEs to a WE when $n$ goes to infinity. \cite{wan2012coalitions} generalizes this result in the framework of composite games where
nonatomic players and atomic splittable players coexist. 

A work particularly in line with the present paper is \cite{gentile2017nash} where the authors show, using  a variational inequalities approach, that the distance between a VNE and a VWE in aggregative games  converges to zero when the number of players tends to infinity. 
Their VWE corresponds to an equilibrium of the game where each atomic player is replaced by a population, as done in  \Cref{subsec:SVWEapproxVNE} of this paper. Their quantitative results show that the self impact of each player on the aggregate action is negligible when the number of players is large.
Our paper shows a different result: VNEs are approximated  by a VWE of a game with a \emph{reduced dimension}.
To this end, \Cref{subsec:SVWEapproxVNE} extends the results of \cite{gentile2017nash} to the subdifferentiable case. 
In \cite{PaulinWan2018routingGameCDC}, Jacquot and Wan show that, in congestion games with a continuum of heterogeneous players, the VWE can be approximated by a VNE of an approximating game with a finite number of players. 
In \cite{PaulinWan2018nonsmooth}, those results are extended to aggregative games  with nonsmooth cost functions. 
 The approach developed in the present paper is actually the inverse of the one taken in \cite{PaulinWan2018routingGameCDC} and \cite{PaulinWan2018nonsmooth}: here, the VWE in the auxiliary game serves as an approximation of a VNE of the original large game.

\paragraph{Main contributions}
\begin{itemize}[wide]
\item We define (\Cref{subsec:class}) an approximating population game, with smaller dimension but close enough to the original large game---quantified through the Hausdorff distance between action sets and between subgradients of players' objective functions;
\item we show (\Cref{th:pseudoVNE}) that VNEs are close to VWEs in large aggregative games, extending \cite[Thm.~1]{gentile2017nash} to subdifferentiable cost functions and general \emph{aggregatively monotone} games;
\item we show (\Cref{th:estimNE:main}) that a particular VWE of the  approximating reduced-dimension population game is close to any VNE of the original game with or without coupling constraints, and we provide an explicit expression of this approximation error;%
\item we provide a numerical illustration of our results (\Cref{sec:appliDRapprox}) based on a practical application: the decentralized charging of electric vehicles through demand response \cite{palensky2011demand}. %
This example illustrates the nondifferentiable case through piece-wise linear functions (``block rates tariffs'') and  coupling constraints (capacities and limited variations). 
This example shows that the proposed method is implementable and reduces the time needed to compute an equilibrium by computing its approximation (six times faster for an approximation with a normalized distance to the actual equilibrium of less than $2\%$).
\end{itemize}

\paragraph{Structure}
The remainder of this paper is organized as follows:  \Cref{sec:congestionModel} specifies the framework of aggregative games with coupling constraints, and recalls the notions of variational  equilibria and monotonicity for VIs, as well as several results on the existence and uniqueness of equilibria. 
\Cref{sec:approxRes} formulates the main results: \Cref{subsec:SVWEapproxVNE} shows that a VWE approximates VNEs in large games and then, \Cref{subsec:class} formulates the approximating population game with the approximation measures, and gives an upper bound on the distance between the VWE profile of the approximating game and an original VNE profile. 
\Cref{sec:appliDRapprox} presents a numerical illustration in the context of electricity demand response.

\section{Aggregative Games with Coupling Constraints}  \label{sec:congestionModel}

\subsection{Model and equilibria}
We consider the model of \emph{atomic aggregative} games:  the cost function of each player depends only on her own action and on the \emph{aggregate action}.
The term \emph{atomic} is opposed to the \emph{nonatomic} model where players are infinitesimal and have a negligible weight \cite{nisan2007algorithmic}. 
As the aggregate action of players is considered in this model, all players have their feasible action sets lying in the same space $\rit^T$ with $T\in \nit^*\eqd\nit\setminus\{0\}$ fixed. 
 The formal definition is as follows.
 \begin{definition}\label{def:atomicGame} 
An \emph{atomic aggregative game with coupling constraints} $\GA$ is defined by a finite set of players $\N \eqd \{1,\dots, \n , \dots , N\}$, a fixed $T \hh \in \nit^*$ and:
\begin{itemize}[leftmargin=*,wide,labelindent=-1pt]
\item for each player  $n  \in\N$, a set of feasible actions $\X_\n  \subset  \rit^T$; 
\item a set $\Aset\subset \rit^T$ defining the \emph{aggregative coupling constraint}: $\bxxag \in \Aset$ where $\bxxag \eqd \tfrac{1}{N}\sum_n \xx_n$ denotes the average action for individual actions of players $(\xx_n)_{n\in\N}$;
\item for each   $n   \in\N$, a cost function $f_\n\hh : \X_\n \hh\times\hh \Aset   \rightarrow \rit$, which defines the cost $f_n(\xx_n,\bxxag)$ for individual action $\xx_n$ and average action $\bxxag$.
\end{itemize}
The game is written as the tuple $\GA \eqd \big(\N, \FX, (f_n)_n, \Aset \big)$.
\end{definition}
We denote  by $\FX\eqd \X_1\times\dots \times \X_N$ the product set of action profiles,
and the set of feasible average profiles  by:\begin{equation*}
\resizebox{\columnwidth}{!}{$\Sxag\eqd \{ \xxag \in \rit^T  \hh : \forall \n \in \N ,  \exists \xx_\n  \in \X_\n \, \text{ s.t. }\txt\frac{1}{N}\sum_{\n \in\N } \xx_\n = \xxag  \}.$}
\end{equation*} 
Last, we denote the set of feasible action profiles satisfying the coupling constraint $\Aset$ by:
\begin{equation*} 
\FX(\Aset) \eqd \{\xx \in \FX: \txt\frac{1}{N}\sum_{\n \in\N } \xx_\n  \in  \Aset \} \ . 
\end{equation*}

A particular class of aggregative games with many applications is given by  \emph{splittable congestion games}\cite{orda1993competitive}: %
\begin{example} \label{def:splittableCongGame} A splittable congestion game is defined by a set of resources $\T\eqd \{1,\dots,T\}$ and a set of player $\N$ with actions set $(\X_n)_{n\in\N} \subset (\rit^\T)^\N$. Each player $n \in\N$ chooses an action $(\x\nt)_{t\in\T}\in \X_n$, where $\x\nt$ is the load of $ n$ on resource $t\in\T$.  Besides, each resource $t$ has a cost function $c_t: \rit \rightarrow \rit$,  while each player has an individual utility function $u_n:\X_n \rightarrow \rit$, which define the cost function $f_n  : \X_\n  \times \Sxag \rightarrow \rit$ of player $n\in\N$  as:
\begin{equation}\label{eq:cost_player_def}
\forall (\xx_n,\byyag) \hh \in \hh \X_n \hh \times \Sxag, f_\n (\xx_\n ,\byyag)  \eqd  \sum_{\t\in\T}\x\nt c_t(\byag_t) -u_\n (\xx_\n ) .
\end{equation} 
\end{example}

In aggregative games, the action $\xx_n$ of player $n$ appears both in the first and second argument (through $\bxxag$) of her cost function $f_n$. This leads us to consider for each $n\in\N$, the \emph{modified} cost function:
\begin{equation*}
 \hat{f}_\n (\xx_\n ,\byyag_{-\n}) \eqd f_\n (\xx_\n , \byyag_{-\n} \hh +\hh \tfrac{1}{N}\xx_\n )
\end{equation*}
\resizebox{\columnwidth}{!}{for $\xx_\n\hh\in\hh \X_\n $ and $\byyag_{-\n}\hh \in \hh \Sxag_{-\n} \eqd \{\frac{1}{N}\sum_{\m \neq n }\xx_\m :\xx_\m \hh \in \hh \X_\m, \forall \m \}$.%
}

\smallskip

\noindent In this paper, we adopt the following standard assumptions:
\newpage
\begin{assumption}\label{assp_convex_costs}~~ \nopagebreak
\begin{enumerate}[wide,labelindent=4pt,label=\roman*)]
\item for each player $\n \in\N $, the set $\X_\n $ is a convex and compact subset of $\rit^T$ with nonempty relative interior. Besides, there exists a fixed $\diamX>0$, such that $\max_{\n \in\N}\max_{\xx\in \X_n} \norm{\xx_n} \leq \diamX$;

\item for each $n\in\N$, the cost function $f_n$ and modified cost function $\hf_n$ are convex in the first variable;
\item for each $n\in\N$, the cost function $f_n:(\xx_n,\bxxag) \mapsto f_n(\xx_n,\bxxag)$ is Lipschitz continuous in $\bxxag$, with Lipschitz constant $L_2$ independent of $n$ and $\xx_n$; \label{assp:it:costLip}
\item $\Aset$ is a convex closed set of $\rit^T$, and $\Sxag\cap \Aset$ is not empty.

\end{enumerate}
\end{assumption}

In particular, \Cref{assp_convex_costs}.i) implies that  $\Sxag$ has a nonempty relative interior.
Let us now consider the subgradient correspondences $\corProdGradHf: \FX \rightrightarrows \rit^{NT}$ and $\corProdGradf: \FX \rightrightarrows \rit^{NT}$, defined for any $\xx\in \FX$ as:
\begin{align*}
\corProdGradHf(\xx) &\eqd \hh \{(\bm{h}_\n )_{\n\in \N }\hh \in\hh \rit^{NT}: \bm{h}_\n  \hh \in \hh \partial_1 \hf_\n (\xx_\n , \bxxag_{-\n }) , \  \forall \n \in \N \} \\
&= \txt\prod_{\n\in\N } \partial_1 \hf_\n (\xx_\n , \bxxag_{-\n }) \  ; \\
\corProdGradf(\xx) & \eqd\{(\bm{h}_\n )_{\n\in \N }\in \rit^{NT}: \bm{h}_\n  \in \partial_1 f_\n (\xx_\n , \bxxag) , \  \forall \n \in \N \} 
\\ &= \txt\prod_{\n\in\N } \partial_1 f_\n (\xx_\n , \bxxag)\  ,
\end{align*}
where $\partial_1$ denotes the partial differential w.r.t. the first variable of the function, while $\prod_\n$ denotes here the Cartesian product.
The interpretation of $\corProdGradHf(\xx)$ is clear: $\bm{h}_\n $ is a subgradient of player $ \n $'s utility function $\hf_\n $ w.r.t. her action $\xx_\n $. Let us leave the interpretation of $\corProdGradf(\xx)$ until \Cref{def:pseudoVNE}. 
For the moment, let us establish a relation between $\corProdGradHf$ and $\corProdGradf$:
\begin{lemma}\label{prop:subgradients-sets}
For each $\xx \in \FX$, $\bm{h} \in \corProdGradHf(\xx)$ if and only if there exists $\g_1 \in  \txt\prod_{\n\in\N } \partial_1 f_\n (\xx_\n , \bxxag) = \corProdGradf(\xx)$ and $\g_2 \in  \txt\prod_{\n\in\N } \partial_2 f_\n (\xx_\n , \bxxag) $ such that $\bm{h}= \g_1 + \tfrac{1}{N}\g_2$.
\end{lemma}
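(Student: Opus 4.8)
The plan is to reduce the asserted equivalence to a one‑player identity and then recognise it as the subdifferential chain rule for an affine precomposition. Fix $\xx\in\FX$ and a player $\n\in\N$, and set $\bxxag_{-\n}\eqd\tfrac1N\sum_{\m\neq\n}\xx_\m$, so that $\bxxag=\bxxag_{-\n}+\tfrac1N\xx_\n$. By the definition of the modified cost, the map $\xx_\n\mapsto\hf_\n(\xx_\n,\bxxag_{-\n})$ is the composition $f_\n\circ A_\n$ of $f_\n$ with the affine map $A_\n\colon\xx_\n\mapsto(\xx_\n,\ \bxxag_{-\n}+\tfrac1N\xx_\n)$, whose linear part $\xx_\n\mapsto(\xx_\n,\tfrac1N\xx_\n)$ has adjoint $(\g_1,\g_2)\mapsto\g_1+\tfrac1N\g_2$. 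Since membership in $\corProdGradHf(\xx)$, in $\corProdGradf(\xx)$, and in the right‑hand side of the claim is imposed block by block in $\n$, it suffices to prove, for every $\n$,
\[
\partial_1\hf_\n(\xx_\n,\bxxag_{-\n})=\bigl\{\g_1+\tfrac1N\g_2\ :\ \g_1\in\partial_1 f_\n(\xx_\n,\bxxag),\ \g_2\in\partial_2 f_\n(\xx_\n,\bxxag)\bigr\},
\]
and then take the Cartesian product over $\n\in\N$, which is immediate from the product‑set formulas already recorded for $\corProdGradHf$ and $\corProdGradf$.

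For the one‑player identity I would invoke the chain rule for subdifferentials of a convex function precomposed with an affine map: since $\hf_\n(\cdot,\bxxag_{-\n})$ is finite‑valued and convex by \Cref{assp_convex_costs}.ii), the qualification condition holds and
\[
\partial_1\hf_\n(\xx_\n,\bxxag_{-\n})=\bigl\{\g_1+\tfrac1N\g_2\ :\ (\g_1,\g_2)\in\partial f_\n(\xx_\n,\bxxag)\bigr\},
\]
where $\partial f_\n(\xx_\n,\bxxag)$ is the \emph{joint} subdifferential. Comparing with the target, what remains is to identify $\partial f_\n(\xx_\n,\bxxag)$ with the product $\partial_1 f_\n(\xx_\n,\bxxag)\times\partial_2 f_\n(\xx_\n,\bxxag)$. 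One inclusion is free — testing the joint subgradient inequality with only the first, then only the second, argument varying shows $\partial f_\n\subseteq\partial_1 f_\n\times\partial_2 f_\n$ — while the reverse inclusion is where the aggregative structure enters: for the costs of interest (e.g. the splittable congestion games of \Cref{def:splittableCongGame}, $f_\n(\xx_\n,\byyag)=\sum_{\t\in\T}\x\nt c_t(\byag_t)-u_\n(\xx_\n)$) the dependence on the aggregate enters through a term separable over coordinates and affine in $\xx_\n$, and for such $f_\n$ the joint subdifferential genuinely splits into the product of the two partial ones. Composing the two displays then yields the one‑player identity.

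The delicate point — the step I expect to be the main obstacle — is exactly this identification of the joint subdifferential with the product of the partials: for a generic convex function of two blocks of variables $\partial_1 f_\n\times\partial_2 f_\n$ can be strictly larger than $\partial f_\n$, so the argument cannot be purely formal and must lean on the special form of aggregative costs (and, via \Cref{assp_convex_costs}.ii)--iii), on the additivity of the relevant one‑sided directional derivatives). An equivalent, more hands‑on route uses the substitution $\bxxag'\eqd\bxxag_{-\n}+\tfrac1N\xx_\n'$, which gives $\bxxag'-\bxxag=\tfrac1N(\xx_\n'-\xx_\n)$ and hence $\tfrac1N\langle\g_2,\xx_\n'-\xx_\n\rangle=\langle\g_2,\bxxag'-\bxxag\rangle$: one then checks directly that a pair $(\g_1,\g_2)$ of partial subgradients at $(\xx_\n,\bxxag)$ assembles into a subgradient $\g_1+\tfrac1N\g_2$ of $\hf_\n(\cdot,\bxxag_{-\n})$ at $\xx_\n$ — and conversely — again using that, for the costs at hand, the aggregate dependence is affine in the player's own action. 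Everything else is bookkeeping: rewriting $\hf_\n$ as an affine precomposition and assembling the per‑player identities into the product form.
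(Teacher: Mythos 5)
Your route---write $\hf_\n(\cdot,\bxxag_{-\n})$ as $f_\n$ precomposed with the affine map $A_\n:\xx_\n\mapsto(\xx_\n,\bxxag_{-\n}+\tfrac1N\xx_\n)$, apply the subdifferential chain rule, and assemble the per-player identities into Cartesian products---is essentially the paper's route: its entire proof is a one-line appeal to the corresponding calculus rules in Bauschke--Combettes, so you have made explicit what that citation hides. The ``only if'' direction is fully established by your argument: the chain rule gives $\partial_1\hf_\n(\xx_\n,\bxxag_{-\n})\subseteq\{\g_1+\tfrac1N\g_2:(\g_1,\g_2)\in\partial f_\n(\xx_\n,\bxxag)\}$ with the \emph{joint} subdifferential, and restricting the joint subgradient inequality to each block gives $\partial f_\n\subseteq\partial_1 f_\n\times\partial_2 f_\n$. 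That is also the only direction the paper uses downstream (in the proof of \Cref{th:pseudoVNE}, to decompose $\bm{h}$ and bound $\norm{\g_2}$ by $\sqrt{N}L_2$).

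The gap is in the ``if'' direction, exactly where you flag it. Your argument needs the reverse inclusion $\partial_1 f_\n\times\partial_2 f_\n\subseteq\partial f_\n$, which fails for general convex functions of two blocks (for $f(u,v)=|u+v|$ at the origin the joint subdifferential is the diagonal segment $\{(t,t):t\in[-1,1]\}$ while the product of partials is the full square), and you only verify it for the separable congestion costs of \Cref{def:splittableCongGame}, not for arbitrary $f_\n$ satisfying \Cref{assp_convex_costs}. So as a proof of the lemma \emph{as stated} the step is not justified; note also that the convex chain rule you invoke presupposes joint convexity (or regularity) of $f_\n$ at $(\xx_\n,\bxxag)$, whereas \Cref{assp_convex_costs}.ii) only assumes convexity in the first argument. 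Both caveats are inherited from the paper's own statement rather than introduced by you, and neither affects the results that rely on the lemma; but if you want a complete proof you should either restrict the equivalence to the joint subdifferential, or prove the reverse inclusion directly from the structural form of $f_\n$ you are willing to assume, as you sketch in your ``hands-on'' variant.
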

\begin{proof}
The proof can be obtained from \cite[Prop. 16.6 and 16.7]{combettes2011monotone}. Details are omitted for brevity.
\end{proof}
\smallskip

In the presence of coupling constraints, the notion of Nash Equilibrium (NE) \cite{nash1950equilibrium} is replaced by the one of Generalized Nash  Equilibrium (GNE). 
A profile $\xx\in \FX(\Aset)$ is a GNE if, for each player $ \n $:
\begin{equation*}
\resizebox{\columnwidth}{!}{$\hf_\n (\xx_\n , \bxxag_{-\n}) \hh \leq  \hf_\n (\yy_\n , \bxxag_{-\n}), \ \forall \yy_\n \hh \in\hh \X_n \text{ s.t. }(\textstyle \tfrac{\yy_\n}{N} \hh+ \hh \bxxag_{-\n})\hh \in \Sxag\cap \Aset.$}
\end{equation*} 
 For atomic games, Variational Nash Equilibria (VNE) \cite{harker1991generalized,Kulkarni2012vne} form a special class of GNEs with  symmetric properties: in some sense, the burden of constraint $\bxxag \in \Aset$ is shared symmetrically by players \cite{harker1991generalized}.
  VNEs, in the subdifferentiable case, are characterized as the solution of   a generalized VI (GVI) \eqref{cond:ind_opt_ve} stated below. 

\begin{definition}[Variational Nash Equilibrium (VNE), \cite{harker1991generalized}]\label{def:ve-finite}
A VNE   is a solution $\hxx\in \FX(\Aset)$ to the following GVI problem: 
 \begin{align}\label{cond:ind_opt_ve}
\exists\, \g\in \corProdGradHf(\hxx)  \text{ s.t. } %
&  \txt \big\langle \g, \xx- \hxx\big\rangle\geq 0,\; \forall \xx \in \FX(\Aset).
 \end{align}
\end{definition}

In particular, if  $\Sxag\subset \Aset$,  a VNE is an NE. We refer to \cite{facchinei2010gnep} for further discussions on VNE and VI characterization. 
In this paper, we adopt the notion of VNE  as the equilibrium concept in the presence of aggregate constraints. 

As the  first step of approximation, we define a \emph{nonatomic} aggregative game $\GA'$ associated to $\GA$, where each player $ \n $ is replaced by a continuum of identical nonatomic players. The set of nonatomic players is represented by the  interval $[0,1]$.
 Each player in population $ \n $ has action set $\X_\n $ and cost function $f_n$.  
 In nonatomic games, the concept of VNE is replaced by the concept of Variational Wardrop Equilibrium (VWE). We refer to \cite{PaulinWan2018nonsmooth} for the formal definition of a VWE and further discussions.
 In general, a VWE is characterized  by an infinite dimensional VI. Here, we only consider \emph{symmetric} VWE, characterized by a VI of finite dimension as defined below:
\begin{definition}\label{def:pseudoVNE}
A \emph{Symmetric Variational Wardrop Equilibrium} (SVWE) of $\GA'$  is  a solution $\sxx\in \FX(\Aset)$ to the GVI:
\begin{align}\label{eq:def-pseudoGVI}
  \exists \g\in \corProdGradf(\sxx) \text{ s.t. } %
 & \langle \g, \xx-\sxx \rangle\geq 0,\; \forall \xx\in \FX(\Aset)\ .
 \end{align}
 where for each $\n\in\N$, $\sxx_n$ is the common action adopted by all nonatomic players in population $\n$.
We denote by $\bsxxag \eqd \sum_n \sxx_n$ the SVWE average action.
\end{definition}

In addition to the population  game interpretation, an SVWE can be interpreted as an equilibrium concept in the initial game, where the contribution of each atomic player on the average action $\bxxag$ is negligible \cite{gentile2017nash}.
Given an SVWE $\sxx$, the profile $\sxx_n$ of a player $n$  will not minimize her cost function $\hat{f}_n(\cdot,\bsxxag_{-n})$, but will minimize $f_n(\cdot,\bsxxag)$ considering $\bsxxag$ as fixed. %

The existence of an SVWE in the population game $\GA'$, with players identical in the same population, is given in \Cref{prop:exist_ve}. %

\begin{proposition}[Existence of equilibria]\label{prop:exist_ve}
  Under \Cref{assp_convex_costs}, $\GA$ (resp. $\GA'$) admits a VNE (resp. SVWE). 
\end{proposition}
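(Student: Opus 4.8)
The plan is to recast both existence claims as \emph{generalized variational inequalities} (GVIs) over the common feasible set $\FX(\Aset)$ and to invoke a standard existence result for GVIs driven by a set-valued operator. Indeed, by \Cref{def:ve-finite} a VNE of $\GA$ is a solution of the GVI associated with the set $\FX(\Aset)$ and the correspondence $\corProdGradHf$ (condition \eqref{cond:ind_opt_ve}), while by \Cref{def:pseudoVNE} an SVWE of $\GA'$ is a solution of the GVI associated with $\FX(\Aset)$ and $\corProdGradf$ (condition \eqref{eq:def-pseudoGVI}). So it suffices to check, for each of these two operators, the hypotheses of a GVI existence theorem.

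First I would show that $\FX(\Aset)$ is nonempty, convex and compact. Convexity holds because $\FX=\X_1\times\dots\times\X_N$ is a product of convex sets (\Cref{assp_convex_costs}.i)), the averaging map $\xx\mapsto\frac{1}{N}\sum_{\n}\xx_\n$ is linear, and $\Aset$ is convex (\Cref{assp_convex_costs}.iv)); thus $\FX(\Aset)$ is the intersection of $\FX$ with the preimage of $\Aset$ under a linear map. Compactness follows because $\FX$ is compact as a finite product of compact sets and the preimage of the closed set $\Aset$ is closed, so $\FX(\Aset)$ is a closed subset of the compact set $\FX$ — in particular bounded, even though $\Aset$ itself need not be. Nonemptiness is exactly \Cref{assp_convex_costs}.iv): choosing $\xxag\in\Sxag\cap\Aset$, the definition of $\Sxag$ furnishes $\xx_\n\in\X_\n$ with $\frac{1}{N}\sum_{\n}\xx_\n=\xxag\in\Aset$, which is a point of $\FX(\Aset)$.

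Next I would verify that $\corProdGradHf$ and $\corProdGradf$ are nonempty-, convex- and compact-valued, and upper semicontinuous, on $\FX(\Aset)$. Both are Cartesian products over $\n\in\N$ of the partial subdifferentials $\partial_1\hf_\n(\xx_\n,\bxxag_{-\n})$, resp. $\partial_1 f_\n(\xx_\n,\bxxag)$, so it is enough to treat one factor. Each such set is convex and closed by definition; and since $\hf_\n(\cdot,\bxxag_{-\n})$ and $f_\n(\cdot,\bxxag)$ are finite convex functions on $\X_\n$ (\Cref{assp_convex_costs}.ii)), which has nonempty relative interior (\Cref{assp_convex_costs}.i)), they are continuous with locally bounded subdifferentials, so the values are nonempty and compact. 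For upper semicontinuity I would prove the closed-graph property of $(\xx_\n,\byyag)\mapsto\partial_1 f_\n(\xx_\n,\byyag)$: given $\bm{h}^k\in\partial_1 f_\n(\xx_\n^k,\byyag^k)$ with $(\xx_\n^k,\byyag^k,\bm{h}^k)\to(\xx_\n,\byyag,\bm{h})$, pass to the limit in the subgradient inequality $f_\n(\xx_\n',\byyag^k)\geq f_\n(\xx_\n^k,\byyag^k)+\langle\bm{h}^k,\xx_\n'-\xx_\n^k\rangle$ using continuity of $f_\n$ in its first variable and Lipschitz continuity in its second (\Cref{assp_convex_costs}.iii)) to obtain $\bm{h}\in\partial_1 f_\n(\xx_\n,\byyag)$; the argument for $\hf_\n$ is the same since $\byyag_{-\n}\mapsto\byyag_{-\n}+\frac{1}{N}\xx_\n$ is continuous. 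Composing with the continuous map $\xx\mapsto\bxxag_{-\n}$ (resp. $\xx\mapsto\bxxag$) and using local boundedness then yields upper semicontinuity of $\corProdGradHf$ and $\corProdGradf$.

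Finally I would conclude with the standard existence theorem for a GVI on a nonempty compact convex set whose operator is upper semicontinuous with nonempty compact convex values (obtained e.g. via Kakutani's fixed point theorem; see \cite{facchinei2007finite,combettes2011monotone}): this produces $\hxx\in\FX(\Aset)$ and $\g\in\corProdGradHf(\hxx)$ solving \eqref{cond:ind_opt_ve}, i.e. a VNE of $\GA$, and $\sxx\in\FX(\Aset)$ and $\g\in\corProdGradf(\sxx)$ solving \eqref{eq:def-pseudoGVI}, i.e. an SVWE of $\GA'$. I expect the main obstacle to be the previous step, specifically the behaviour of the subdifferential correspondences at the relative boundary of the $\X_\n$ (where a finite convex function may fail to be continuous or to have a bounded, nonempty subdifferential): this is exactly where \Cref{assp_convex_costs}.i)--iii) intervene, and, if the $f_\n$ are given only on $\X_\n$, one should argue inside the affine hulls $\aff\X_\n$ using the nonempty-relative-interior hypothesis, or equivalently assume that each $f_\n$ is the restriction of a finite convex function defined on a neighborhood of $\X_\n$.
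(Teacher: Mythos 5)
Your proposal is correct and takes essentially the same route as the paper's proof: the paper likewise reduces both claims to the GVI formulations \eqref{cond:ind_opt_ve} and \eqref{eq:def-pseudoGVI}, cites \cite[Prop.~8.7]{rockafellar2009variational} for the fact that $\corProdGradHf$ and $\corProdGradf$ are nonempty-, convex-, compact-valued and upper hemicontinuous (which you verify by hand via the closed-graph and local-boundedness argument), and then invokes a GVI existence theorem on the compact convex set $\FX(\Aset)$ (the paper uses \cite[Cor.~3.1]{chanpang1982gqvip} where you appeal to a Kakutani-type result). Your additional care about the feasible set being nonempty, convex and compact, and about subdifferential behaviour at the relative boundary of the $\X_\n$, is consistent with the paper's standing assumptions and does not change the argument.
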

\begin{proof} From \cite[Prop. 8.7]{rockafellar2009variational}, we obtain that  
$\corProdGradHf$ and $\corProdGradf$ are nonempty, convex, compact valued, upper hemicontinuous correspondences.
Then \cite[Cor. 3.1]{chanpang1982gqvip} shows that the GVI problems \eqref{cond:ind_opt_ve} and \eqref{eq:def-pseudoGVI}  admit a solution on the finite dimensional convex compact $\FX(\Aset)$. %
\end{proof}

Before discussing the uniqueness of equilibria, 
 let us recall some relevant definition of monotonicity for correspondences:
\begin{definition}\label{def:mono_atom}
A correspondence $\Gamma:\FX \rightrightarrows \rit^{NT}$ is said:
\begin{enumerate}[wide,labelindent=4pt,label=\roman*)]%
\item\emph{monotone} if for all $\xx, \yy \in \FX , \g\in \Gamma(\xx), \bh \in \Gamma(\yy)$:\label{cd:mono_atom}
\begin{equation*}
\txt\sum_{\n\in \N } \langle \g_\n  - \bh_\n , \xx_\n  - \yy_\n  \rangle \geq 0\ ;
\end{equation*}

\item \emph{strictly monotone} if the equality  holds \textit{iff} $\xx=\yy$;

\item \resizebox{0.93\columnwidth}{!}{\emph{aggregatively strictly monotone} if  equality  holds \textit{iff} $\underset{n}{\sum} \xx_\n\hh = \hh\underset{n}{\sum} \yy_\n $;}
\item $\stgccvut$-\emph{strongly monotone} if $\stgccvut>0$ and, for all $\xx, \yy \in \FX$:
\begin{equation*}
\txt\sum_{\n} \langle  \g_\n  -\bh_\n , \xx_\n  - \yy_\n  \rangle \hh\geq\hh \stgccvut\|\xx-\yy\|^2, \, \forall\g\!\in\! \Gamma(\xx), \bh\! \in\! \Gamma(\yy)\, ;
\end{equation*}

\item $\beta$-\emph{aggregatively strongly monotone} on $\FX$ if $\beta>0$ and, for all $\xx, \yy \in \FX$ with $\bxxag=\tfrac{1}{N}\sum_\n \xx_\n $, $\byyag=\tfrac{1}{N}\sum_\n \yy_\n $: 
\begin{equation*}
\txt \sum_{\n } \langle \g_\n  - \bh_\n , \xx_\n  - \yy_\n  \rangle\! \geq\! N\beta\|\bxxag -\byyag\|^2 , \, \forall\g\!\in\! \Gamma(\xx), \bh\! \in\! \Gamma(\yy) .
\end{equation*}
\end{enumerate}

\end{definition}

\smallskip

If $T=1$,  ``monotone" corresponds to ``increasing". 
Besides, (aggregatively) strict monotonicity implies monotonicity, while strong (resp. aggregatively strong) monotonicity implies strict (resp. aggregatively strict) monotonicity.
\smallskip

\Cref{th:unique_vnevwe} recalls some results on the uniqueness of VNE and SVWE, according to the  monotonicity of  $\corProdGradHf$ and $\corProdGradf$:
\begin{proposition}[Uniqueness of equilibria]\label{th:unique_vnevwe}%
Under \Cref{assp_convex_costs}: 
\begin{enumerate}[wide,label=\roman*),labelindent=4pt]
\item if $\corProdGradHf$ (resp. $\corProdGradf$) is strictly monotone, then $\GA$ (resp. $\GA'$) has a unique VNE (resp. SVWE); 
\item if  $\corProdGradHf$ (resp. $\corProdGradf$) is aggregatively strictly monotone, then all VNE (resp. SVWE) of $\GA$ (resp. $\GA'$) have the same aggregate profile.
\end{enumerate}
\end{proposition}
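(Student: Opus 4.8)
The plan is to run the classical monotone variational-inequality uniqueness argument on the two GVIs \eqref{cond:ind_opt_ve} and \eqref{eq:def-pseudoGVI}. I will describe the VNE / $\corProdGradHf$ case; the SVWE / $\corProdGradf$ case is word-for-word identical after replacing \Cref{def:ve-finite} and \eqref{cond:ind_opt_ve} by \Cref{def:pseudoVNE} and \eqref{eq:def-pseudoGVI}, and $\corProdGradHf$ by $\corProdGradf$.

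First I would suppose $\hxx^1,\hxx^2\in\FX(\Aset)$ are two VNEs of $\GA$. By \Cref{def:ve-finite} there exist selections $\g^1\in\corProdGradHf(\hxx^1)$ and $\g^2\in\corProdGradHf(\hxx^2)$ with $\langle\g^1,\xx-\hxx^1\rangle\ge 0$ and $\langle\g^2,\xx-\hxx^2\rangle\ge 0$ for every $\xx\in\FX(\Aset)$. Here I use that $\FX(\Aset)$ is convex --- it is the intersection of the convex product set $\FX$ with the preimage of the convex set $\Aset$ under the linear averaging map $\xx\mapsto\tfrac{1}{N}\sum_\n\xx_\n$ --- so in particular it contains both $\hxx^1$ and $\hxx^2$, which are the only test points needed.

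Next I would test the inequality for $\hxx^1$ at $\xx=\hxx^2$ and the inequality for $\hxx^2$ at $\xx=\hxx^1$, and add them; this yields $\langle\g^1-\g^2,\hxx^2-\hxx^1\rangle\ge 0$, i.e. $\sum_{\n\in\N}\langle\g^1_\n-\g^2_\n,\hxx^1_\n-\hxx^2_\n\rangle\le 0$. On the other hand, strict monotonicity (resp. aggregatively strict monotonicity) of $\corProdGradHf$ implies plain monotonicity --- as noted right after \Cref{def:mono_atom} --- so the same sum is $\ge 0$, hence it equals $0$. In the strictly monotone case the equality condition in \Cref{def:mono_atom}.ii) forces $\hxx^1=\hxx^2$, which proves i); in the aggregatively strictly monotone case the equality condition in \Cref{def:mono_atom}.iii) forces $\sum_\n\hxx^1_\n=\sum_\n\hxx^2_\n$, i.e. all VNEs share the same aggregate profile, which proves ii).

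I expect essentially no obstacle: the argument is routine. The one point deserving a second look is that in \eqref{cond:ind_opt_ve} the subgradient selection is chosen \emph{before} the universal quantifier over test points $\xx$, so a single $\g^i\in\corProdGradHf(\hxx^i)$ simultaneously certifies the inequality against the other solution $\hxx^{3-i}$ --- which is exactly the form of the GVI, so nothing needs repair. Nonemptiness of $\corProdGradHf$ and $\corProdGradf$ (hence existence of $\g^1,\g^2$) is already covered in the proof of \Cref{prop:exist_ve}, and the implications ``strict $\Rightarrow$ monotone'' and ``aggregatively strict $\Rightarrow$ monotone'' used above are recorded in the text immediately after \Cref{def:mono_atom}.
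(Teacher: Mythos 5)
Your proof is correct and is exactly the standard monotone-GVI uniqueness argument that the paper itself invokes by citing \cite[Chapter 2]{facchinei2007finite} rather than writing it out. Testing each solution's variational inequality against the other, adding, and invoking the (aggregatively) strict monotonicity equality condition is precisely the intended adaptation, so there is nothing to add.
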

\proof The proof can be adapted from \cite[Chapter 2]{facchinei2007finite}.
\smallskip

Except in some particular cases (e.g. $\cc$ linear \cite{orda1993competitive,richman2007topounique}), the operator $\corProdGradHf$ is rarely monotone.
The  monotonicity of $\corProdGradf$, however, can usually be obtained under simple assumptions:  for a splittable congestion game  (\Cref{def:splittableCongGame}), as stated in 
 \cite[Lemma 3]{gentile2017nash}, we have:
 \begin{enumerate}[wide,label=\roman*),labelindent=0pt]
   \item if for each $t\in\T$, $c_t$ is convex and nondecreasing, and for each $\n\in\N$, $u_n$ is concave, then  $\corProdGradf$ is monotone;
\item if in addition to i), for each  $ \n  \in \N $, $u_\n $ is $\stgccvut_\n $-strongly concave,  then $\corProdGradf$ is $\stgccvut$-strongly monotone with $\stgccvut\eqd \min_{\n \in\N }\stgccvut_\n $;
\item if in addition to i), for each $t\in\T$,  $c_t$ is $\beta_t$-strictly increasing, then $\corProdGradf$ is $\beta$-aggregatively strongly monotone with  $\beta\eqd \min_{t\in\T}\beta_t$.
\end{enumerate}

In view of \Cref{th:unique_vnevwe}, the absence of monotonicity of $\corProdGradHf$ can result in multiple VNEs \cite{bhaskar2009notunique}. 
However, the next section shows that, with a large number of players, those VNEs 
 are close to each other, and are well approximated by an SVWE.%

\section{Approximating VNEs of a Large Game}
\label{sec:approxRes}
\subsection{Considering SVWE instead of VNE}
\label{subsec:SVWEapproxVNE}

Let $\X_0\subset \rit^T$ be the closed convex hull of $\bigcup_{\n\in \N }\X_\n $. Then, we have from \Cref{assp_convex_costs}.i): 
 \begin{equation*}
 \max_{\xx \in \X_0} \norm{\xx} \leq \diamX \ .
\end{equation*}

\Cref{th:pseudoVNE} gives the first step of approximation of VNEs, by bounding the distance between a VNE and an SVWE.%
\begin{theorem}[SVWE close to VNE]\label{th:pseudoVNE}
Under \Cref{assp_convex_costs}, let $\xx \hh\in \hh \FX(\Aset)$ be a VNE of  $\GA$ and $\sxx\in \FX(\Aset)$ an SVWE of $\GA'$: %
\begin{enumerate}[wide,label=\roman*),labelindent=4pt]
\item if $ \corProdGradf$ is $\stgccvut$-strongly monotone, then $\xx^*$ is unique and we have: 
\begin{equation*}
  \|\xx-\sxx\|  \leq \frac{L_2}{\stgccvut}\frac{1}{\sqrt{N}}. %
\end{equation*}
From this bound and  Cauchy-Schwartz inequality, we also have:
\begin{equation*}
 \tfrac{1}{N}\sum_\n\|\xx_\n -\sxx_\n\|  \leq \frac{L_2}{\stgccvut N}  \    \text{ and } \  
    \|\bxxag-\bsxxag\|  \leq  \frac{L_2}{\stgccvut N}  , %
  \end{equation*}
  where $L_2$ is the Lipschitz constant defined in \Cref{assp_convex_costs};
  \item if  $ \corProdGradf$ is $\beta$-aggr. str. monotone, then $\bxxag^*$ is unique and: 
\begin{equation*} %
\|\bxxag-\bsxxag\|\leq   \sqrt{\tfrac{ 2 \diamX L_2 }{\beta N }} . %
\end{equation*}
 \end{enumerate}
\end{theorem}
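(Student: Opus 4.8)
The plan is to run the classical variational-inequality comparison argument --- evaluate each equilibrium's VI at the other equilibrium --- while using \Cref{prop:subgradients-sets} to bound the mismatch between the operators $\corProdGradHf$ and $\corProdGradf$ by a factor $1/N$. First I would fix a VNE $\xx$ of $\GA$ together with a selection $\g\in\corProdGradHf(\xx)$ satisfying \eqref{cond:ind_opt_ve}, and an SVWE $\sxx$ of $\GA'$ together with a selection $\bar\g\in\corProdGradf(\sxx)$ satisfying \eqref{eq:def-pseudoGVI} (both exist by definition of the two equilibria). By \Cref{prop:subgradients-sets} I write $\g=\g_1+\tfrac1N\g_2$ with $\g_1\in\corProdGradf(\xx)$ and $\g_2=(\g_{2,\n})_{\n}$, $\g_{2,\n}\in\partial_2 f_\n(\xx_\n,\bxxag)$. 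The two quantitative inputs I record at this stage are: by \Cref{assp_convex_costs}.\ref{assp:it:costLip}, $\|\g_{2,\n}\|\le L_2$ for every $\n$, hence $\|\g_2\|\le\sqrt N\,L_2$; and by \Cref{assp_convex_costs}.i), $\|\xx_\n\|,\|\sxx_\n\|\le\diamX$, hence $\|\sxx_\n-\xx_\n\|\le 2\diamX$ and $\|\sxx-\xx\|\le 2\sqrt N\,\diamX$.

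Next I would derive the common core inequality. Since $\sxx\in\FX(\Aset)$, evaluating \eqref{cond:ind_opt_ve} at the feasible test point $\sxx$ gives $\langle\g,\xx-\sxx\rangle\le 0$, that is $\langle\g_1,\xx-\sxx\rangle\le\tfrac1N\langle\g_2,\sxx-\xx\rangle$; since $\xx\in\FX(\Aset)$, evaluating \eqref{eq:def-pseudoGVI} at the test point $\xx$ gives $\langle\bar\g,\xx-\sxx\rangle\ge 0$. Subtracting and dropping the nonnegative term $\langle\bar\g,\xx-\sxx\rangle$ yields $\langle\g_1-\bar\g,\xx-\sxx\rangle\le\tfrac1N\langle\g_2,\sxx-\xx\rangle\le\tfrac1N\|\g_2\|\,\|\sxx-\xx\|$. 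Crucially, the left-hand side is precisely the expression occurring in \Cref{def:mono_atom} for the correspondence $\corProdGradf$ at the pair $(\xx,\sxx)$, so each part of the theorem now follows by inserting the relevant lower bound.

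For part i), $\stgccvut$-strong monotonicity of $\corProdGradf$ bounds the left-hand side below by $\stgccvut\|\xx-\sxx\|^2$; since $\tfrac1N\|\g_2\|\le\tfrac{L_2}{\sqrt N}$, this gives $\stgccvut\|\xx-\sxx\|^2\le\tfrac{L_2}{\sqrt N}\|\xx-\sxx\|$, whence $\|\xx-\sxx\|\le L_2/(\stgccvut\sqrt N)$ (the case $\xx=\sxx$ being trivial). The remaining two estimates are immediate from Cauchy--Schwarz and the triangle inequality, since $\sum_\n\|\xx_\n-\sxx_\n\|\le\sqrt N\,\|\xx-\sxx\|$ and $\|\bxxag-\bsxxag\|\le\tfrac1N\sum_\n\|\xx_\n-\sxx_\n\|$. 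Uniqueness of the SVWE follows from \Cref{th:unique_vnevwe}.i), strong monotonicity implying strict monotonicity.

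For part ii), $\beta$-aggregative strong monotonicity bounds the left-hand side below by $N\beta\|\bxxag-\bsxxag\|^2$. This is the one place the argument genuinely departs from part i), and it is the step I expect to be the real (mild) obstacle: the right-hand side must \emph{not} be folded back into $\|\xx-\sxx\|$, which is not controlled by $\|\bxxag-\bsxxag\|$, so instead I bound it by the \emph{constant} $\tfrac1N\|\g_2\|\,\|\sxx-\xx\|\le\tfrac1N\cdot\sqrt N L_2\cdot 2\sqrt N\diamX=2\diamX L_2$, using the uniform boundedness of the action sets. Then $N\beta\|\bxxag-\bsxxag\|^2\le 2\diamX L_2$ gives $\|\bxxag-\bsxxag\|\le\sqrt{2\diamX L_2/(\beta N)}$, and uniqueness of the SVWE aggregate is \Cref{th:unique_vnevwe}.ii). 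Beyond this asymmetry, the only care needed is to keep track of whether a given norm lives on $\rit^{NT}$ or on $\rit^{T}$, so that each $\sqrt N$ ends up on the correct side of the inequalities.
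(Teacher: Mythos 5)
Your proposal is correct and follows essentially the same route as the paper's proof: evaluate each variational inequality at the other equilibrium, use \Cref{prop:subgradients-sets} to split the VNE selection as $\g_1+\tfrac1N\g_2$ with $\|\g_2\|\le\sqrt N L_2$, and then invoke the relevant monotonicity of $\corProdGradf$, bounding the residual term by $\tfrac{L_2}{\sqrt N}\|\xx-\sxx\|$ in case i) and by the constant $2\diamX L_2$ in case ii). The only difference is cosmetic (you drop the nonnegative SVWE term separately rather than adding the two VIs first), and all constants and uniqueness claims match.
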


\smallskip
\ifonlineversion
\begin{proof}

 i) We start by observing that for any $\yy \in \FX$, if $\bm{g}_2 \in \prod_n \partial f_n(\yy_n, \byyag)$ then, from \Cref{assp_convex_costs}.iii) we get
$ \norm{\bm{g}_2} \leq \sqrt{N} L_2 $  .

  By definition of $\xx$ VNE (resp. $\sxx$), there exists $\bm{h} \in \corProdGradHf(\xx)$ (resp. $\bm{h}^* \in  \corProdGradf(\xx^*)$) such that:
  \begin{align*}
 \big\langle \bm{h}, \xx- \sxx\big\rangle\leq 0 \text{ and }  \big\langle \bm{h^*}, \sxx- \xx\big\rangle\leq 0 \ .
  \end{align*}
  Adding those two inequalities, we obtain:
  \begin{equation*}
     \big\langle \bm{h}^* -\bm{h}, \sxx - \xx \big\rangle\leq 0 \ .
  \end{equation*}
  Besides, from \Cref{prop:subgradients-sets}, there exists $\g_1 \in  \corProdGradf(\xx)$ and $\g_2 \in  \prod_n \partial f_n(\xx_n, \bxxag)$ such that $\bm{h} = \g_1 + \tfrac{1}{N} \g_2$.
  Then, using the $\stgccvut$-strong monotonicity of  $\corProdGradf$, we have:
  \begin{align*}
    \stgccvut \norm{\xx^* - \xx }^2  & \leq      \big\langle \bm{h}^* -\g_1, \sxx - \xx \big\rangle \\
  &  =       \big\langle \bm{h}^* -\bm{h}, \sxx - \xx \big\rangle  + \big\langle  \tfrac{1}{N} \g_2, \sxx - \xx \big\rangle \\
 &   \leq 0 +  \big\langle  \tfrac{1}{N} \g_2, \sxx - \xx \big\rangle \\
&    \leq  \tfrac{1}{N} \norm{\g_2} \norm{\sxx - \xx } \leq  \tfrac{L_2}{\sqrt{N}} \norm{\g_2} \norm{\sxx - \xx } ,
  \end{align*}
  using Cauchy-Schwarz inequality.
  We conclude by  simplifying by $  \norm{\sxx - \xx }$ to obtain the first bound of  \Cref{th:pseudoVNE}.i).
    
Besides, by Cauchy-Schwarz inequality, we have: 
\begin{equation*}
(\sum_\n\|\xx_\n -\sxx_\n\| )^2 \leq N \sum_\n\|\xx_\n -\sxx_\n\|^2 \leq \tfrac{N L_2^2}{\stgccvut^2N^2}  =\tfrac{ L_2^2}{\stgccvut^2N}  
\end{equation*}
which gives the second bound  of  \Cref{th:pseudoVNE}.i). 
Similarly, we have: 
\begin{equation*}
N\|\bxxag - \bsxxag\|=\|\sum_\n (\xx_\n - \sxx_\n)\| \leq \sqrt{ N \|\xx - \sxx \|^2} \leq \tfrac{L_2}{\stgccvut} .
\end{equation*}

\smallskip

ii) The proof is similar, except that we exploit the  $\beta$-strong monotonicity of  $\corProdGradf$ to get:
\begin{align*}
N\beta  \|\bxxag-\bsxxag\|^2 &\leq      \big\langle \bm{h}^* -\g_1, \sxx - \xx \big\rangle \\
       &    \leq  \tfrac{1}{N} \norm{\g_2} \norm{\sxx - \xx } \\
  &\leq   \tfrac{1}{\sqrt{N}} L_2  \times 2 \sqrt{N} \diamX  \ ,
\end{align*}
which leads to the desired bound. 

  \else
  For \textit{i)}, one can adapt \cite[Th.~1.2)]{gentile2017nash} to the subdifferential case by relying on \Cref{prop:subgradients-sets}.
  Item \textit{ii)} generalizes \cite[Th.~1.3]{gentile2017nash} for   \emph{aggr. monotone} games. We refer to online version \cite{PaulinWan2019estimationOnline} for details.
  \fi
\end{proof}  

While VNEs are not unique in general (see \Cref{th:unique_vnevwe}), by  applying the triangle inequality to the results of \Cref{th:pseudoVNE}  we observe that all VNEs are close to each other when $N$ is large. 

\Cref{th:pseudoVNE} shows that an (average) SVWE approximates an (average) VNE of $\GA$ when the number of players is large.
However, this does not reduce the dimension of the GVI to resolve, as the GVI characterizing the VNE and those characterizing  SVWE have the same dimension. 
For this reason, the second step of approximation consists of  regrouping the similar populations.%

\subsection{Classification of populations}
\label{subsec:class}

In this subsection, we shall %
 regroup the populations of the game  $\GA'$ having similar strategy sets $\X_\n $ and utility subgradients $\partial(-u_\n )$,  into  larger populations, endow them with a common strategy set and a common utility function, so that the SVWE of this new population game approximates the SVWE of $\GA'$.
 The similarity between two sets $\X$ and $\Y$ is measured through the Hausdorff distance:
 \begin{equation*}
 d_H(\X,\Y) \eqd \max\big(  \max_{\xx\in\X} d(\xx,\Y), \ \max_{\yy\in\Y} d(\yy,\X) \big) \ , 
\end{equation*}
where $d(\xx,\Y)\eqd \inf_{\yy \in\Y} \norm{\xx-\yy}$ is the standard distance function.
Let us define the  compact set %
$ \M \eqd \X_0 +B_0(\delta) \ , $
 where $B_0(\delta)$ is the ball centered at 0 and of radius $\delta$, and $\delta\geq \max_{n,m \in \N} d_H(\X_n,\X_m) $.
W.l.o.g, we assume that for each $ \n \in \N $, $f_\n $ can be extended to a neighborhood of $\M^2$, and is bounded on $\M^2$. 
We make the additional assumption: 
\begin{assumption} \label{assp:lipschitzVar1}
For each  $n\in\N$ and each $\bxxag \in \Sxag$, $f_n(\cdot,\bxxag)$ is Lipschitz continuous with constant $L_{1n}$, and $L_1\hh \eqd \hh \max_n L_{1n}\hh < \hh \infty$.
\end{assumption}

\smallskip

At the SVWE of the game with reduced dimension, all the nonatomic players in the same population play the same action, by definition of an SVWE.
 Therefore, in order for this new SVWE to well approximate the SVWE in $\GA'$, a condition is that similar populations in $\GA'$ do play similar actions at the SVWE of $\GA'$: this is obtained easily in the case  without coupling constraint, as shown in 
\Cref{prop:continuiteNE}. %

\begin{proposition}\label{prop:continuiteNE}
Under \Cref{assp_convex_costs}, let $\sxx\in \FX$ be an SVWE of $\G'(\Aset)$ with $\Aset =\rit^T$ (no coupling constraint).  
For  populations $ \n ,\m \in \N $, if  $f_\n(\cdot,\bsxxag) $ is $\stgccvut_\n $-strongly convex, $d_H(\X_\n , \X_{\m }) \leq \dset$,  and if:
\vspace{-0.2cm}
\begin{equation*} \sup_{\g_\m \in \partial_1 f_m(\sxx_m, \bsxxag)} d(\g_\m , \partial_1 f_n(\sxx_m, \bsxxag) ))\leq \duti_*
\end{equation*}
 then: 
\begin{equation*}
\norm{\sxx_\n  - \sxx_\m }^2\leq \tfrac{1}{\stgccvut_\n }\big((L_{1n}+ L_{1m})\dset  +2 \duti_*   \diamX\big).
\end{equation*}
\end{proposition}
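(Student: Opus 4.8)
The plan is to use that, when $\Aset=\rit^T$, the feasible set $\FX(\Aset)=\FX=\prod_\n\X_\n$ is a Cartesian product, so the variational inequality \eqref{eq:def-pseudoGVI} defining the SVWE $\sxx$ decouples coordinatewise: for every population $\n$ there is $\g_\n\in\partial_1 f_\n(\sxx_\n,\bsxxag)$ with $\langle \g_\n,\xx_\n-\sxx_\n\rangle\ge 0$ for all $\xx_\n\in\X_\n$, i.e.\ $\sxx_\n$ minimizes the convex function $f_\n(\cdot,\bsxxag)$ over $\X_\n$. I fix the two populations $\n,\m$ and write $\g_\n,\g_\m$ for the subgradients obtained this way at $\sxx_\n$ and $\sxx_\m$. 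By \Cref{assp:lipschitzVar1}, every subgradient of $f_\n(\cdot,\bsxxag)$ (resp. $f_\m(\cdot,\bsxxag)$) has norm at most $L_{1n}$ (resp. $L_{1m}$); in particular $\|\g_\n\|\le L_{1n}$ and $\|\g_\m\|\le L_{1m}$.

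Next I would bring in the two hypotheses through well-chosen comparison points. Since $d_H(\X_\n,\X_\m)\le\dset$, choose $\yy_\n\in\X_\n$ with $\|\yy_\n-\sxx_\m\|\le\dset$ and $\yy_\m\in\X_\m$ with $\|\yy_\m-\sxx_\n\|\le\dset$. Since $\partial_1 f_\n(\sxx_\m,\bsxxag)$ is closed, the bound $d\big(\g_\m,\partial_1 f_\n(\sxx_\m,\bsxxag)\big)\le \duti_*$ yields $\g_\n'\in\partial_1 f_\n(\sxx_\m,\bsxxag)$ with $\|\g_\n'-\g_\m\|\le \duti_*$. Note that $\sxx_\m$ lies in $\X_\m$, not necessarily in $\X_\n$: this is exactly why the paper extends each $f_\n$ to a neighborhood of $\M\supseteq\X_0$ and assumes the relevant regularity (here $\stgccvut_\n$-strong convexity, and Lipschitzness) on that neighborhood, so that $\g_\n'$ and the subgradient inequalities below are legitimate at $\sxx_\m$.

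The core of the argument is to write the $\stgccvut_\n$-strong convexity inequality for $f_\n(\cdot,\bsxxag)$ twice: once at $\sxx_\n$ (with subgradient $\g_\n$) evaluated at $\sxx_\m$, and once at $\sxx_\m$ (with subgradient $\g_\n'$) evaluated at $\sxx_\n$. Adding the two cancels the function values and produces the ``strong monotonicity of the subdifferential'' estimate $\langle \g_\n'-\g_\n,\ \sxx_\m-\sxx_\n\rangle \ge \stgccvut_\n\|\sxx_\m-\sxx_\n\|^2$. It then remains to bound the left-hand side from above. Writing $\g_\n'-\g_\n=(\g_\n'-\g_\m)+\g_\m-\g_\n$, the term $\langle\g_\n'-\g_\m,\sxx_\m-\sxx_\n\rangle$ is at most $\duti_*\,\|\sxx_\m-\sxx_\n\|\le 2\duti_*\diamX$ (using $\|\sxx_\n\|,\|\sxx_\m\|\le\diamX$ from \Cref{assp_convex_costs}). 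For $\langle \g_\m,\sxx_\m-\sxx_\n\rangle$, split along $\yy_\m$: $\langle\g_\m,\sxx_\m-\yy_\m\rangle\le 0$ by optimality of $\sxx_\m$ (since $\yy_\m\in\X_\m$), while $\langle\g_\m,\yy_\m-\sxx_\n\rangle\le L_{1m}\dset$. Symmetrically, $\langle \g_\n,\sxx_\n-\sxx_\m\rangle\le L_{1n}\dset$ via $\yy_\n$ and optimality of $\sxx_\n$. Collecting the three contributions gives $\stgccvut_\n\|\sxx_\m-\sxx_\n\|^2\le (L_{1n}+L_{1m})\dset + 2\duti_*\diamX$, which is the claimed inequality.

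The main obstacle is precisely the one flagged above: making the strong-convexity and subgradient machinery meaningful at the point $\sxx_\m$, which need not belong to $\X_\n$; once the extension of each $f_\n$ to a neighborhood of $\M$ is invoked this is harmless. Everything else is routine: the two optimality variational inequalities, Cauchy--Schwarz, the Hausdorff and subgradient-distance bounds, and the uniform action bound $\diamX$.
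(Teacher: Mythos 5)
Your proof is correct and follows essentially the same route as the paper's: the coordinatewise optimality conditions from the decoupled VI, a subgradient $\g_\n'\in\partial_1 f_\n(\sxx_\m,\bsxxag)$ at distance $\duti_*$ from $\g_\m$, the strong-monotonicity estimate $\langle \g_\n'-\g_\n,\sxx_\m-\sxx_\n\rangle\ge\stgccvut_\n\|\sxx_\m-\sxx_\n\|^2$, and the split of the cross terms through points of $\X_\n$ and $\X_\m$ at Hausdorff distance $\dset$ (the paper uses the projections $\Proj_{\X_\n}(\sxx_\m)$, $\Proj_{\X_\m}(\sxx_\n)$ where you pick arbitrary $\dset$-close points, which is immaterial). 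Your explicit remark that the argument needs $f_\n$ and its strong convexity to make sense at $\sxx_\m\notin\X_\n$, justified by the extension of $f_\n$ to a neighborhood of $\M$, is a point the paper's proof uses only implicitly.
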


\smallskip

\ifonlineversion

\begin{proof}
  Let $\g(\sxx)\in \corProdGradf(\sxx)$ satisfying the GVI \eqref{eq:def-pseudoGVI}. As there is no coupling constraint, for any $\xx_\n \hh \in \X_\n$ (resp. $\xx_\m \hh \in \X_\m$), one  can consider $\xx \eqd (\xx_n,\sxx_{-n}) \in \FX$ (resp. $\xx \eqd (\xx_m,\sxx_{-m})$) in \eqref{def:pseudoVNE} to obtain:
  \begin{equation*}
 \langle \g_\n (\sxx ), \sxx_\n  -\xx_\n  \rangle \leq 0 \text{ and }      \  \langle \g_\m (\sxx ), \sxx_\m  -\xx_\m  \rangle \leq 0 
  \end{equation*}
Let $\bh_\n (\sxx ) \in \partial_1 f_n(\sxx_\m, \bsxxag )$ be such that $\|\bh_\n  (\sxx )- \g_\m  (\sxx )\|\leq \duti_*$. 
Then, by the strong convexity of $f_n(\cdot,\bsxxag) $:
\begin{align*}
&\stgccvut_\n  \norm{\sxx_\n   -\sxx_\m }^2  \leq   \langle \g_\n (\sxx )-\bh_\n (\sxx ), \sxx_\n -\sxx_\m  \rangle \\
= & \langle \g_\n (\sxx )-\g_\m (\sxx )+\g_\m (\sxx )-\bh_\n (\sxx ), \sxx_\n -\sxx_\m  \rangle   \\
\leq & \langle \g_\n (\sxx ) - \g_\m (\sxx ),\sxx_\n -\sxx_\m   \rangle + 2  \duti_* \diamX \ .
\end{align*}
Making use of the inequalities \eqref{eq:def-pseudoGVI} on $\g_\n(\sxx)$  (resp. $\g_\m(\sxx)$), we obtain that the first part of this last quantity is bounded as:
\begin{align*}
 &\langle \g_\n (\sxx ) - \g_\m (\sxx ),\sxx_\n - \Proj_{\n}(\sxx ) +\Proj_{\n}(\sxx )- \sxx_\m    \rangle  \\
\leq & 0+   \langle \g_\n (\sxx )  ,\Proj_{\n}(\sxx )-\sxx_\m   \rangle   
 +\langle \g_\m (\sxx )  ,\Proj_{\m }(\sxx_\n )- \sxx_\n  \rangle\\
\leq & L_{1n}\dset + L_{1m}\dset \ ,
\end{align*}
where $\Proj_\n $ (resp. $\Proj_\m $) is the projector on $\X_\n $ (resp. $\X_\m $).
\end{proof} 
\else
\begin{proof} The proof is obtained using strong convexity, the definition of $ \duti$, and Lipschitz continuity, following the same scheme as for \Cref{th:estimNE:main} \end{proof}
\fi

In the presence of coupling constraints,  a similar result is harder to obtain: because there is a coupling in the admissible deviations $(\xx_n)_n \in \FXA$, the  SVWE condition \eqref{def:pseudoVNE} cannot be translated into individual conditions.

Let us now present the regrouping procedure.
Denote by  $\tG\epopset(\Aset)$  an auxiliary game, with a set $\popset$ of $\popcard$ populations.
 Each population $\pidx \in \popset$ corresponds to a subset $\N_\pidx$ of populations in the  game $\GA'$, such that $\bigcup_{\pidx\in \popset}\N_\pidx = \N $ and for any $\pidx,j\in\popset, \N_\pidx \cap \N_j =\emptyset$, i.e $(\N_\pidx)_{\pidx\in\popset}$ forms a partition of $\N$.
 Denote $\popcard_\pidx = |\N_\pidx|$ the number of original populations now included in $\pidx$.
 Each nonatomic player in population $\pidx$ is represented by a point $\theta\in [0,\popcard _\pidx]$. 
 The common action set of each nonatomic player in $\pidx$ is a compact convex subset of $\rit^T$, denoted by $\X_\pidx$. 
 Each player $\th$ in each population $\pidx$ playing action $\xx_\th$, let 
\begin{equation*}\bxxag \eqd \tfrac{1}{N}\sum_{\pidx \in \popset}\int_{\th\in [0,\popcard_\pidx]} \xx_\th \dth
\end{equation*} denotes the aggregate action profile. The set of aggregate action-profile is a subset of $\rit^T$ given as:
\begin{equation*}
\Sxag\epopset \eqd \{\tfrac{1}{N}\txt \sum_{\pidx\in \popset}\int_{\th\in [0,\popcard_\pidx]} \xx_\th \dth: \xx_\th\in \X_\pidx, \, \forall \th\in [0,\popcard_\pidx], \, \forall \pidx\in \popset\} \  . \end{equation*}
All players $\th$ in population $\pidx \in \popset$ have a similar cost function denoted by $f_\pidx: (\xx_{\th},\bxxag) \mapsto f_\pidx(\xx_\th, \bxxag)$, convex w.r.t. $\xx_\th$.

We are only interested in \emph{symmetric} action profiles, where all the nonatomic players in each population $ \pidx $ play the same action. 
 Denote the set of symmetric action profiles by $\FX\epopset=\prod_{\pidx\in \popset} \X_\pidx$. 
 Considering the coupling constraint  $ \Aset$, let: \begin{equation*}
 \FX\epopset(\Aset)\eqd \{\xx\in\FX\epopset:\bxxag=\tfrac{1}{N} \txt \sum_{\pidx\in \popset}\popcard _\pidx\xx_\pidx\in \Aset \}\ .
 \end{equation*}

\noindent We introduce two indicators to quantify the clustering of $\tG\epopset$:
\begin{itemize}[wide]
\item $\mdset = \max_{\pidx\in \popset} \dset_\pidx$, where
\begin{equation} \label{eq:def_dsetAtom}
\dset_\pidx \eqd \txt\max_{\n  \in\N_\pidx} d_{H}\left( \X_\n , \X_\pidx \right)\ ,
\end{equation} 
\item $\mduti = \max_{\pidx\in \popset} \duti_\pidx$, where
\begin{equation}\label{eq:def_dutAtom}
\duti_\pidx \eqd  \max_{\n  \in \N_\pidx}\sup_{\xx_\pidx\in \X_\pidx, \bxxag \in \M}   \hh\hh d_H\left(  \partial_1 f_i(\xx_\pidx,\bxxag)  , \partial_1 f_n(\xx_\pidx,\bxxag)\right).
\end{equation}
\end{itemize}
The quantity $\dset_\pidx$ measures the heterogeneity in strategy sets of populations within the group $\N_\pidx$, while  $\duti_\pidx$ measures the heterogeneity in the subgradients in the group $\N_\pidx$.

Since the auxiliary game $\tG\epopset$ is to be used to compute an approximation of a VNE of $\G$, the indicators $\dset_\pidx$ and $\duti_\pidx$ should be minimized when defining $\tG\epopset$. 
Thus, we assume that $(\X_\pidx)_\pidx$ and $(u_\pidx)_\pidx$ are chosen such that the following holds:
\begin{assumption} \label{assp:Xnandun} For each $\pidx \in\popset$, we have:
\begin{enumerate}[wide,label=\roman*),labelindent=4pt]
\item $\X_\pidx$ is a subset of  $\X_0$, the convex hull of $\bigcup_{\n\in \N_\pidx}\X_\n $, so that $ \max_{\xx\in \X_\pidx}  \| \xx \| \leq \diamX$. %
Moreover, for each $ \n \in\N_\pidx$, $\aff \X_\n  \subset \aff \X_\pidx$, where $\aff S$ denotes the affine hull of $S$;
\item similarly,  $f_\pidx$ is chosen such that $\partial_1 f_i(\xx_\pidx,\bxxag)$ is contained in the convex hull of $\bigcup_{\n\in \N_\pidx}\partial_1 f_n(\xx_\pidx,\bxxag)$ for all $\xx\in \X_\pidx$ and $\bxxag \in \M$, so that $\sup_{\g_i\in \partial_1 f_i(\xx_\pidx,\bxxag)}\norm{\g_i} \leq \displaystyle\max_{\n\in \N_\pidx}L_{1n} \leq L_1$.
\end{enumerate}
\end{assumption}

An interesting case appears  when $\N $ can be divided into homogeneous populations, as in \Cref{ex:homogenousPops} below:

\begin{example} \label{ex:homogenousPops}
The player set $\N $ can be divided into a small number $\popcard$  of subsets $(\N_\pidx)_{1\leq \pidx \leq \popcard}$, with homogeneous players inside each subset $\N_\pidx$ (i.e., for each $\pidx$ and $ \n ,\m  \in \N_\pidx$, $\X_\n =\X_\m  $ and $f_\n =f_\m $). In that case, consider an auxiliary game $\tG\epopset$ with $\pidx$ populations and, for each $\pidx \in \popset$ and $ \n \in\N_\pidx$, $\X_\pidx \eqd \X_\n $ and $f_\pidx \eqd f_\n $. Then, $\mdset=\mduti=0$.
\end{example}

In order to approximate the SVWE of $\GA'$ by the SVWE of an auxiliary game $\tG\epopset$, let us first state the following result on  the geometry of the action sets for technical use.
\begin{lemma}\label{lm:intprofileAt}
Under \Cref{assp_convex_costs}, there exists a strictly positive constant $\rho$  and an action profile $\zz\in \FX(\Aset)$ such that, $d(\zz_\n , \rbd \X_\n )\geq \rho$ for all $\n \in \N $, where $\rbd$ stands for the relative boundary.
\end{lemma}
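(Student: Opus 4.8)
The plan is to construct $\zz$ explicitly: choose a feasible average action lying in the relative interior of $\Sxag$, split it into individual actions lying in the relative interiors $\rlt\X_\n$, and then set $\rho$ equal to the smallest, over the finitely many players, of the distances from these individual actions to the corresponding relative boundaries $\rbd\X_\n$.

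First I would fix a point $\bar{\zz}\in\Aset\cap\rlt\Sxag$. This set is nonempty under the standing assumptions: \Cref{assp_convex_costs}.i) yields $\rlt\Sxag\neq\emptyset$, \Cref{assp_convex_costs}.iv) yields $\Sxag\cap\Aset\neq\emptyset$, and the Slater-type compatibility $\Aset\cap\rlt\Sxag\neq\emptyset$ is the qualification implicitly in force here (it is indispensable: otherwise $\Aset$ could meet $\Sxag$ only along its relative boundary, forcing every feasible profile to touch some $\rbd\X_\n$).

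Second --- the technical core --- I would lift $\bar{\zz}$ to a profile whose components sit deep inside each $\X_\n$. The set $\Sxag$ is the image of the compact convex set $\X_1\times\dots\times\X_N$ under the linear map $(\xx_\n)_\n\mapsto\tfrac{1}{N}\sum_\n\xx_\n$; since the relative interior commutes with linear images and with finite Cartesian products, this gives $\rlt\Sxag=\tfrac{1}{N}\sum_\n\rlt\X_\n$ (see \cite{rockafellar2009variational} for these classical facts, or equivalently invoke that the relative interior of a Minkowski sum of convex sets is the sum of their relative interiors). Hence $\bar{\zz}\in\rlt\Sxag$ provides a profile $\zz\eqd(\zz_\n)_\n$ with $\zz_\n\in\rlt\X_\n$ for every $\n$ and $\tfrac{1}{N}\sum_\n\zz_\n=\bar{\zz}$.

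It then remains to verify the statement and exhibit $\rho$. We have $\zz\in\FX$ since $\zz_\n\in\X_\n$ for each $\n$, and $\tfrac{1}{N}\sum_\n\zz_\n=\bar{\zz}\in\Aset$, so $\zz\in\FX(\Aset)$. For each $\n$, the set $\rbd\X_\n=\X_\n\setminus\rlt\X_\n$ is closed (it is the intersection of the closed set $\X_\n$ with $\aff\X_\n\setminus\rlt\X_\n$, which is closed because $\rlt\X_\n$ is open in the closed affine hull $\aff\X_\n$) and does not contain $\zz_\n$; therefore $\rho_\n\eqd d(\zz_\n,\rbd\X_\n)>0$ (the constraint being vacuous for any $\n$ with $\rbd\X_\n=\emptyset$, i.e.\ $\X_\n$ a singleton). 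Taking $\rho\eqd\min_{\n\in\N}\rho_\n$, which is strictly positive because $\N$ is finite, yields $d(\zz_\n,\rbd\X_\n)\geq\rho$ for all $\n$, as required. The only genuinely delicate point in the whole argument is the relative-interior bookkeeping of the second step --- securing a feasible average in $\rlt\Sxag$ and lifting it componentwise to the $\rlt\X_\n$ --- while everything else is routine and uses only the finiteness of $\N$.
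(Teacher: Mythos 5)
Your proof is correct, but it takes a genuinely different route from the paper's. The paper argues quantitatively: it picks a ``deepest'' profile $\check{\xx}$ (each $\check{\xx}_\n$ maximizing $d(\cdot,\rbd\X_\n)$ over $\X_\n$, with value $\eta_\n\geq\eta>0$) and a feasible profile $\yy\in\FX(\Aset)$ whose average $\byyag$ maximizes the distance to $\rbd\Aset$ over $\Sxag\cap\Aset$, then interpolates, $\zz=(1-t)\yy+t\check{\xx}$ with $t=d(\byyag,\rbd\Aset)/(3\diamX)$; convexity of each $\X_\n$ gives $d(\zz_\n,\rbd\X_\n)\geq\eta t$ while the step size keeps $\bzzag$ inside $\Aset$, yielding the explicit constant $\rho=\tfrac{\eta}{3\diamX}d(\byyag,\rbd\Aset)$. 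You instead rely on relative-interior calculus --- $\rlt\Sxag=\tfrac{1}{N}\sum_\n\rlt\X_\n$ --- to split a point of $\Aset\cap\rlt\Sxag$ directly into components $\zz_\n\in\rlt\X_\n$, and then obtain $\rho>0$ by finiteness of $\N$; this is shorter and cleaner but yields no explicit formula for $\rho$ in terms of the problem data (which is harmless here, since $\rho$ only enters the later bounds as an abstract constant). Both arguments need a constraint-qualification beyond the literal statement of \Cref{assp_convex_costs}.iv): you require $\Aset\cap\rlt\Sxag\neq\emptyset$, while the paper implicitly requires $d(\byyag,\rbd\Aset)>0$, i.e.\ $\Sxag\cap\rlt\Aset\neq\emptyset$ (otherwise its $\rho$ degenerates to zero); these Slater-type conditions are slightly different but equally mild, and you deserve credit for flagging yours explicitly rather than leaving it tacit.
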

\begin{proof} See \Cref{app:proof:lm-intprofile}. \end{proof}

Given a \emph{symmetric} action profile $\xx\epopset \in \FX\epopset(\Aset)$ in the auxiliary game $\tG\epopset(\Aset)$, we can define a corresponding symmetric action profile of $\GA'$ such that all the nonatomic players in the populations regrouped in $\N_\pidx$ play the same action $\xx\epopset_\pidx$(it is allowed that  $\xx\epopset_\pidx$ be not in $\X_\n $). 
 Formally, we define the map $\psi: \rit^{\popcard T} \rightarrow \rit^{N T}$:
\begin{equation*}
\forall \xx\epopset \hh\hh \in\hh \rit^{\popcard T},\, \psi(\xx\epopset) \hh=\hh(\xx_\n )_{\n\in \N } \, \text{ where } \xx_\n  \hh= \hh\xx\epopset_\pidx \ , \ \forall \n  \in \N_\pidx \, .
\end{equation*}
 Conversely, for a symmetric action profile $\xx$ in $\GA'$, we define a corresponding symmetric action profile in the auxiliary game $\tG\epopset(\Aset)$ by the following map $\bpsi: \rit^{NT} \rightarrow \rit^{\popcard T}$:
\begin{align*}
 \forall \xx \hh\in\hh \rit^{NT}, \,\bpsi(\xx)\hh=\hh(\xx\epopset_\pidx)_{\pidx\in \popset}\, \text{ where } \xx\epopset_\pidx \hh=\hh \tfrac{1}{\popcard _\pidx}\txt\sum_{\n\in\N_\pidx } \xx_\n .
\end{align*}

\Cref{th:estimNE:main} below is the main result of this subsection. It gives an upper bound on the distance between the SVWE of the population game $\GA'$---of same dimension as the original atomic game $\GA$---and the SVWE of an auxiliary game $\tG\epopset(\Aset)$ with reduced dimension.

\begin{theorem}[SVWE of $\tG\epopset(\Aset)$ is close to SVWE of $\G(\Aset)'$]\label{th:estimNE:main}%
Under \Cref{assp_convex_costs,assp:Xnandun}, in an auxiliary game $\tG\epopset(\Aset)$, $\mdset$ and $\mduti$ are defined by \eqref{eq:def_dsetAtom} and \eqref{eq:def_dutAtom}, with $\mdset <\frac{\rho}{2}$. Let $\xx$ be an SVWE of $\tG\epopset(\Aset)$, and $\sxx$ an SVWE of $\G(\Aset)'$. Then: 
\begin{enumerate}[wide,label=\roman*),labelindent=0pt]
\item if $ \corProdGradf$ is $\stgccvut$-strongly monotone, then $\xx$ and $\sxx$ are unique and
 \begin{align} \label{eq:boundClusterIndiv}
   & \|\psi(\xx)-\sxx\|  \leq \sqrt{ {N} \tfrac{ K\left(\mdset,\mduti\right)}{ \stgccvut} }\,
 \end{align}
 From this bound and  Cauchy-Schwartz inequality, we also have:
 \begin{align}
&  \tfrac{1}{N} \hh \sum_\n\|\psi_\n(\xx) \hh-\hh \sxx_\n\|  \leq   \hh \sqrt{\tfrac{ \hh K\left(\mdset,\mduti\right) }{  \stgccvut}  }  
 \text{ and }  
\left\| \bxxag \hh -\hh \bsxxag\right\|  \leq    \sqrt{\tfrac{ K\left(\mdset,\mduti\right) }{  \stgccvut}  };   \nonumber
\end{align}
\item if $ \corProdGradf$ is $\beta$-aggr. str. monotone, then $\bxxag$, $\bsxxag$ are unique and
\begin{equation} 
\left\| \bxxag- \bsxxag \right\| \leq  \sqrt{\tfrac{ K\left(\mdset,\mduti\right) }{\beta}  }\ ,
\end{equation}
\end{enumerate}
where  $K(\mdset ,\mduti) \underset{\mdset,\mduti \rightarrow 0}{ \longrightarrow} 0 $  is a quantity defined as:  
\begin{equation} \label{eq:KboundExpression}
K(\mdset ,\mduti)\eqd 2 \diamX \big( 3 \tfrac{L_1 }{\rho}\mdset + \mduti\big)  \ \text{, with $L_1$ defined in  \ref{assp:lipschitzVar1}}.
\end{equation}
\end{theorem}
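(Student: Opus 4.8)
The plan is to run the ``double-counting'' argument of the proof of \Cref{th:pseudoVNE}, but now between the profile $\psi(\xx)\in\rit^{NT}$ induced by the reduced-game SVWE and the population-game SVWE $\sxx$, so that the set- and subgradient-heterogeneity of the clustering enters only through $K(\mdset,\mduti)$. Write $\iota(\n)\in\popset$ for the group containing player $\n$. The first step is to transport the SVWE condition of $\tG\epopset(\Aset)$ along $\psi$: if $\g\epopset=(\g\epopset_\pidx)_{\pidx\in\popset}$, with $\g\epopset_\pidx\in\partial_1 f_\pidx(\xx_\pidx,\bxxag)$, realizes it, i.e. $\sum_{\pidx}\popcard_\pidx\langle\g\epopset_\pidx,\yy_\pidx-\xx_\pidx\rangle\ge0$ for all $\yy\in\FX\epopset(\Aset)$, note first that $\psi$ preserves the aggregate, so $\psi(\xx)$ induces the same $\bxxag=\tfrac1N\sum_{\pidx}\popcard_\pidx\xx_\pidx\in\Aset$. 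By \eqref{eq:def_dutAtom} (with $\bxxag\in\X_0\subset\M$) and \Cref{assp:Xnandun}.ii), $\|\g\epopset_\pidx\|\le L_1$ and $\g\epopset_\pidx$ is within $\mduti$ of $\partial_1 f_\n(\xx_\pidx,\bxxag)$ for each $\n\in\N_\pidx$; so one may pick $\bm{h}\in\corProdGradf(\psi(\xx))$ with each block within $\mduti$ of the lifted vector $\tilde{\g}\eqd(\g\epopset_{\iota(\n)})_{\n\in\N}$, whence $\|\bm{h}-\tilde{\g}\|\le\sqrt N\,\mduti$ and $\langle\tilde{\g},\psi(\yy)-\psi(\xx)\rangle\ge0$ for all $\yy\in\FX\epopset(\Aset)$.

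The main obstacle is that neither test point to be used is a priori feasible: the blocks of $\psi(\xx)$ lie in the sets $\X_\pidx$ rather than in the $\X_\n$, and the blocks of $\bpsi(\sxx)$ lie in $\X_0$ rather than in the $\X_\pidx$, even though both profiles already induce an aggregate in $\Aset$. I would repair each of them by moving every block first onto the relevant affine hull --- a displacement $\le\mdset$, by \eqref{eq:def_dsetAtom}, \Cref{assp:Xnandun}.i) and orthogonality --- and then a fraction $\mdset/\rho$ of the way towards the matching block of an interior profile furnished by \Cref{lm:intprofileAt} (used in $\G(\Aset)'$, and likewise in $\tG\epopset(\Aset)$): since that block is $\rho$-deep in its set and $\mdset<\rho/2$, the corrected block lands inside the set, the per-block displacement is at most $3\tfrac{\diamX}{\rho}\mdset$ with the right bookkeeping, and, crucially, since the interior profile is itself feasible for $\Aset$, this convex correction returns the (slightly perturbed) aggregate into $\Aset$. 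Restoring feasibility of the aggregate \emph{exactly} while several blocks are displaced at once is the delicate point, and it is precisely what the hypothesis $\mdset<\rho/2$ buys. The outcome is a feasible $\hat{\xx}\in\FX(\Aset)$ with $\|\hat{\xx}-\psi(\xx)\|\le 3\tfrac{\diamX}{\rho}\mdset\sqrt N$ and a feasible $\hat{\yy}\in\FX\epopset(\Aset)$ with $\sum_{\pidx}\popcard_\pidx\|\hat{\yy}_\pidx-\bpsi_\pidx(\sxx)\|\le 3\tfrac{\diamX}{\rho}\mdset N$.

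Finally I would assemble the bound. Let $\g^*\in\corProdGradf(\sxx)$ satisfy \eqref{eq:def-pseudoGVI}. In case i), $\stgccvut$-strong monotonicity of $\corProdGradf$ --- used on $\prod_\n\X_0$, which contains both $\psi(\xx)$ and $\sxx$ --- gives $\stgccvut\|\psi(\xx)-\sxx\|^2\le\langle\bm{h}-\g^*,\psi(\xx)-\sxx\rangle$, and I split the right-hand side as $\langle\tilde{\g},\psi(\xx)-\sxx\rangle+\langle\bm{h}-\tilde{\g},\psi(\xx)-\sxx\rangle+\langle\g^*,\sxx-\psi(\xx)\rangle$. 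The middle term is $\le\|\bm{h}-\tilde{\g}\|\,\|\psi(\xx)-\sxx\|\le\sqrt N\,\mduti\cdot 2\diamX\sqrt N=2\diamX\mduti N$ (all actions have norm $\le\diamX$). The first term equals $-\sum_{\pidx}\popcard_\pidx\langle\g\epopset_\pidx,\bpsi_\pidx(\sxx)-\xx_\pidx\rangle$, and inserting $\hat{\yy}$ into the transported GVI above bounds it by $\sum_{\pidx}\popcard_\pidx\|\g\epopset_\pidx\|\,\|\bpsi_\pidx(\sxx)-\hat{\yy}_\pidx\|\le 3\tfrac{\diamX L_1}{\rho}\mdset N$. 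The last term, via $\langle\g^*,\hat{\xx}-\sxx\rangle\ge0$ from \eqref{eq:def-pseudoGVI} and $\|\g^*\|\le\sqrt N L_1$ (\Cref{assp:lipschitzVar1}), is $\le\|\g^*\|\,\|\hat{\xx}-\psi(\xx)\|\le 3\tfrac{\diamX L_1}{\rho}\mdset N$. Summing gives $\stgccvut\|\psi(\xx)-\sxx\|^2\le N\,K(\mdset,\mduti)$ with $K$ exactly as in \eqref{eq:KboundExpression}, which is \eqref{eq:boundClusterIndiv}; the two averaged bounds then follow by Cauchy--Schwarz as in \Cref{th:pseudoVNE}.i). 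Case ii) is identical, starting instead from $N\beta\|\bxxag-\bsxxag\|^2\le\langle\bm{h}-\g^*,\psi(\xx)-\sxx\rangle$ ($\beta$-aggregative strong monotonicity, using that $\psi(\xx)$ induces aggregate $\bxxag$), bounding the right-hand side by $N\,K(\mdset,\mduti)$ as above and dividing by $N\beta$. Uniqueness of $\sxx$ (resp. $\bsxxag$), and of $\xx$ (resp. $\bxxag$), is \Cref{th:unique_vnevwe} applied to $\G(\Aset)'$ and to $\tG\epopset(\Aset)$, the latter being an aggregative game of the same type whose operator $\corProdGradf^{\popset}$ inherits the relevant (aggregative) monotonicity through \Cref{assp:Xnandun}.ii).
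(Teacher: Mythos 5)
Your proposal is correct and follows essentially the same route as the paper's proof: the same two feasibility-repair constructions (moving $\psi(\xx)$ into $\FX(\Aset)$ and $\bpsi(\sxx)$ into $\FX^{\popset}(\Aset)$ via the interior profile of \Cref{lm:intprofileAt}, which is exactly \Cref{lem:dist_agg_genized_sets_at}), the same use of $\mduti$ to pass from $\partial_1 f_\pidx$ to an element of $\corProdGradf(\psi(\xx))$, and the same three-term decomposition of the monotonicity inner product, arriving at the identical constant $K(\mdset,\mduti)$. The only difference is bookkeeping: you split the $6L_1\diamX\mdset/\rho$ contribution as $3+3$ between the two repaired test points, whereas the paper's lemma yields $4\diamX\mdset/\rho$ and $2\diamX\popcard_\pidx\mdset/\rho$ respectively (note the repair fraction must be $2\mdset/\rho$, not $\mdset/\rho$, in the direction where a $\dset_\pidx$-deep point of $\X_\pidx$ must land in $\X_\n$), but the total is unchanged.
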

\smallskip

\begin{proof}  Let us start by two technical lemmas, for which the proofs are given in \Cref{app:proof:main}.
\begin{lemma}\label{lm:FYAt}~~%
\begin{enumerate}[wide,labelindent=4pt,label=\roman*)]
\item  $\forall \pidx \in \popset$,  $\forall \xx\in \X_\pidx$, if $d(\xx, \rbd \X_\pidx)>\dset_\pidx$, then $\xx\in \X_\n ,  \forall \n \in \N_\pidx$.
\item  $\forall \pidx \in \popset, \n \in \N_\pidx$,  $\forall \xx\in \X_\n $, if $d(\xx, \rbd \X_\n )>\dset_\pidx$, then $\xx\in \X_\pidx$.
\end{enumerate}
\end{lemma}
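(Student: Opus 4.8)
The plan is to derive both inclusions from a single projection argument, using the characterization of the Hausdorff bound, convexity, the relative–interiority hypothesis $d(\xx,\rbd\,\cdot)>\dset_\pidx$, and the affine–hull relations in \Cref{assp:Xnandun}.i). First recall that $d_H(\X_\n,\X_\pidx)\le\dset_\pidx$ (which holds for every $\n\in\N_\pidx$ by \eqref{eq:def_dsetAtom}) is equivalent to the two inclusions $\X_\n\subseteq\X_\pidx+B_0(\dset_\pidx)$ and $\X_\pidx\subseteq\X_\n+B_0(\dset_\pidx)$; both are used below.

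For item~i), fix $\pidx$, $\n\in\N_\pidx$ and $\xx\in\X_\pidx$ with $r\eqd d(\xx,\rbd\X_\pidx)>\dset_\pidx$. Suppose, for contradiction, that $\xx\notin\X_\n$, and let $\zz\eqd\Proj_{\X_\n}(\xx)\in\X_\n$ with unit direction $u\eqd(\xx-\zz)/\norm{\xx-\zz}$. The projection inequality on the closed convex set $\X_\n$ gives $\langle u,\yy-\zz\rangle\le 0$ for all $\yy\in\X_\n$, while $\norm{\xx-\zz}=d(\xx,\X_\n)\le d_H(\X_\n,\X_\pidx)\le\dset_\pidx<r$ (using $\xx\in\X_\pidx$). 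Since $\xx\in\X_\pidx$ and $\zz\in\X_\n$ with $\aff\X_\n\subseteq\aff\X_\pidx$ by \Cref{assp:Xnandun}.i), the vector $u$ lies in the direction space of $\aff\X_\pidx$; hence for any $s\in(\dset_\pidx,r)$ the point $\ww\eqd\xx+s\,u$, which lies in $\aff\X_\pidx$ at distance $s<r=d(\xx,\rbd\X_\pidx)$ from $\xx$, still belongs to $\X_\pidx$. Because $\ww\in\X_\pidx\subseteq\X_\n+B_0(\dset_\pidx)$, there is $\yy\in\X_\n$ with $\norm{\ww-\yy}\le\dset_\pidx$, and then
\begin{equation*}
\langle u,\yy-\zz\rangle=\langle u,\yy-\ww\rangle+\langle u,\ww-\zz\rangle\ge-\dset_\pidx+\big(s+\norm{\xx-\zz}\big)\ge s-\dset_\pidx>0,
\end{equation*}
contradicting the projection inequality. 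Thus $\xx\in\X_\n$, and since $\n\in\N_\pidx$ was arbitrary this proves i).

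For item~ii) I would run the symmetric argument: fix $\pidx$, $\n\in\N_\pidx$, $\xx\in\X_\n$ with $d(\xx,\rbd\X_\n)>\dset_\pidx$; assuming $\xx\notin\X_\pidx$, project $\xx$ onto $\X_\pidx$, displace $\xx$ by a controlled amount \emph{inside} $\X_\n$ (i.e.\ within $\aff\X_\n$) in a direction increasing the linear functional that separates $\xx$ from $\X_\pidx$, and then use $\X_\n\subseteq\X_\pidx+B_0(\dset_\pidx)$ to exhibit a point of $\X_\pidx$ on the wrong side of that separating hyperplane, the interiority margin $d(\xx,\rbd\X_\n)>\dset_\pidx$ yielding the contradiction exactly as in the displayed chain above. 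The main obstacle, and the only nonroutine step of the whole lemma, is precisely the choice of displacement direction in ii): in contrast with i), the projection $\Proj_{\X_\pidx}(\xx)$ need not belong to $\aff\X_\n$, so the separating direction is not automatically a feasible direction of displacement inside $\X_\n$. This is where the two structural conditions of \Cref{assp:Xnandun}.i) are used together — the inclusion $\aff\X_\n\subseteq\aff\X_\pidx$ and the containment $\X_\pidx\subseteq\X_0$, the convex hull of $\bigcup_{\n\in\N_\pidx}\X_\n$ — to bound the component of the separating direction transverse to $\aff\X_\n$ by a quantity comparable to $\dset_\pidx$, so that the displacement taken within $\aff\X_\n$ still overcomes the slack. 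Everything else (existence of the projections, the Cauchy–Schwarz estimates) is routine given \Cref{assp_convex_costs}.i).
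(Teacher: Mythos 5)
Your proof of item i) is correct and follows essentially the paper's own route: the paper also sets $\yy=\Proj_{\X_\n}(\xx)$, pushes $\xx$ outward by $\dset_\pidx$ along $(\xx-\yy)/\norm{\xx-\yy}$ (a direction of $\aff\X_\pidx$, so the interiority margin keeps the displaced point inside $\X_\pidx$), and contradicts $d_H(\X_\n,\X_\pidx)\le\dset_\pidx$; you merely replace the paper's identity $d(\zz,\X_\n)=d(\xx,\X_\n)+\dset_\pidx$ by the equivalent obtuse-angle projection inequality.

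Item ii) is where your submission stops being a proof. You correctly identify the obstruction that the paper's remark (``the proof of ii) is symmetric and omitted'') glosses over: the displacement direction $\xx-\Proj_{\X_\pidx}(\xx)$ lies in the direction space of $\aff\X_\pidx$ but not necessarily in that of $\aff\X_\n$, since \Cref{assp:Xnandun}.i) only gives $\aff\X_\n\subset\aff\X_\pidx$. But the repair you announce --- bounding the component of the separating direction transverse to $\aff\X_\n$ by a quantity comparable to $\dset_\pidx$ --- is never carried out, and it cannot be carried out in this generality, because the statement of ii) itself fails when the inclusion of affine hulls is strict. Take $\N_\pidx=\{\n,\m\}$ with $\X_\n=[0,1]\times\{0\}$, $\X_\m=[0,1]\times\{0.3\}$, and $\X_\pidx=[0,1]\times[0.1,0.2]$, which is contained in the convex hull of $\X_\n\cup\X_\m$ and satisfies $\aff\X_\n\subset\aff\X_\pidx=\rit^2$; then $\dset_\pidx=0.2$, and $\xx=(0.5,0)\in\X_\n$ has $d(\xx,\rbd\X_\n)=0.5>\dset_\pidx$ yet $\xx\notin\X_\pidx$. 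So no argument can close ii) as stated; what is actually needed is the extra hypothesis $\aff\X_\n=\aff\X_\pidx$ for all $\n\in\N_\pidx$ (implicit in the paper's ``symmetric''), under which the literal mirror of your argument for i) --- project onto $\X_\pidx$, displace $\xx$ by $s\in(\dset_\pidx,\,d(\xx,\rbd\X_\n))$ within the common affine hull, and invoke $\X_\n\subset\X_\pidx+B_0(\dset_\pidx)$ --- finishes in three lines. As written, your item ii) is unproved: you must either state that restriction or actually produce the transverse-component bound you allude to, and the example above shows the latter is a dead end.
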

\begin{lemma}\label{lem:dist_agg_genized_sets_at}
Under \Cref{assp_convex_costs}, if $\mdset <\frac{\rho}{2}$, then:
\begin{enumerate}[wide,labelindent=4pt,label=\roman*)]
\item for  each $\xx \in \FX\epopset(\Aset)$, there is $\ww\in \FX(\Aset)$ such that $\|\ww_\n  - \psi_\n (\xx)\| \leq 4 \diamX \tfrac{\mdset}{\rho}$ for each $ \n \in \N $;
\item for each $\xx\in \FX(\Aset)$, there is $\ww\in \FX\epopset(\Aset)$ such that $\|\ww_\pidx-\bpsi_\pidx(\xx)\|\leq 2 \diamX \popcard_\pidx\frac{\mdset}{\rho}$ for each $n\in \popset$.
\end{enumerate}
\end{lemma}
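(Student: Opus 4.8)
\emph{Overall strategy.} Both inclusions will follow from a single feasibility-restoration device built on the deep interior profile $\zz\in\FX(\Aset)$ of \Cref{lm:intprofileAt}. For each group $\pidx\in\popset$ I set the group center $\cc_\pidx\eqd\bpsi_\pidx(\zz)=\tfrac1{\popcard_\pidx}\sum_{\n\in\N_\pidx}\zz_\n$, and I restore feasibility by interpolating the given profile toward $(\cc_\pidx)_\pidx$ with a push parameter $\mu$ of order $\mdset/\rho$. Two facts drive the argument. First, $\tfrac1N\sum_\pidx\popcard_\pidx\cc_\pidx=\tfrac1N\sum_\n\zz_\n\in\Aset$, so any convex combination of a feasible aggregate with $\tfrac1N\sum_\pidx\popcard_\pidx\cc_\pidx$ remains in the convex set $\Aset$; this is how I keep the coupling constraint satisfied. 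Second, $\cc_\pidx$ lies deep inside $\X_\pidx$, so pushing toward it creates a positive margin from $\rbd\X_\pidx$, which is exactly what is needed to invoke \Cref{lm:FYAt}. Throughout I use that $x\mapsto d(x,\rbd\X_\pidx)$ is concave on $\X_\pidx$ and vanishes on $\rbd\X_\pidx$, so that the depth of an interpolation (resp.\ of an average) is at least the interpolation (resp.\ average) of the depths.

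\emph{Part (i).} Let $\xx\in\FX\epopset(\Aset)$, so $\xx_\pidx\in\X_\pidx$ and $\bxxag=\tfrac1N\sum_\pidx\popcard_\pidx\xx_\pidx\in\Aset$. I define $\ww_\n\eqd(1-\mu)\xx_\pidx+\mu\cc_\pidx$ for every $\n\in\N_\pidx$, with $\mu\eqd 2\mdset/\rho$, which is $<1$ since $\mdset<\rho/2$. By construction $\ww$ is constant on each group and $\psi_\n(\xx)=\xx_\pidx$. Granting the depth estimate $d(\cc_\pidx,\rbd\X_\pidx)\geq\rho-\mdset$ discussed below, concavity of the distance to $\rbd\X_\pidx$ gives $d(\ww_\n,\rbd\X_\pidx)\geq\mu(\rho-\mdset)=2\mdset(1-\mdset/\rho)>\mdset\geq\dset_\pidx$, so \Cref{lm:FYAt}.i) forces $\ww_\n\in\X_\n$ for every $\n\in\N_\pidx$, i.e.\ $\ww\in\FX$. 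Since $\ww$ is constant on groups, its aggregate equals $(1-\mu)\bxxag+\mu\tfrac1N\sum_\pidx\popcard_\pidx\cc_\pidx\in\Aset$, hence $\ww\in\FX(\Aset)$. Finally $\|\ww_\n-\psi_\n(\xx)\|=\mu\|\cc_\pidx-\xx_\pidx\|\leq 2\mu\diamX=4\diamX\tfrac{\mdset}{\rho}$, which is the desired bound.

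\emph{Part (ii).} Let $\xx\in\FX(\Aset)$, so $\xx_\n\in\X_\n$ and $\tfrac1N\sum_\n\xx_\n\in\Aset$, and note $\bpsi_\pidx(\xx)=\tfrac1{\popcard_\pidx}\sum_{\n\in\N_\pidx}\xx_\n\in\aff\X_\pidx$ (each $\xx_\n\in\X_\n\subset\aff\X_\pidx$ by \Cref{assp:Xnandun}.i). Since $d_H(\X_\n,\X_\pidx)\leq\dset_\pidx\leq\mdset$, averaging the projections $\Proj_{\X_\pidx}(\xx_\n)\in\X_\pidx$ shows $d(\bpsi_\pidx(\xx),\X_\pidx)\leq\mdset$. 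I again set $\ww_\pidx\eqd(1-\mu)\bpsi_\pidx(\xx)+\mu\cc_\pidx$ with $\mu$ of order $\mdset/\rho$; the depth of $\cc_\pidx$ together with concavity of the distance to $\rbd\X_\pidx$ absorbs the $O(\mdset)$ defect of $\bpsi_\pidx(\xx)$ and yields $\ww_\pidx\in\X_\pidx$, i.e.\ $\ww\in\FX\epopset$. The aggregate is $(1-\mu)\tfrac1N\sum_\n\xx_\n+\mu\tfrac1N\sum_\pidx\popcard_\pidx\cc_\pidx\in\Aset$, so $\ww\in\FX\epopset(\Aset)$. The displacement is $\|\ww_\pidx-\bpsi_\pidx(\xx)\|=\mu\|\cc_\pidx-\bpsi_\pidx(\xx)\|\leq 2\mu\diamX$, which upon inserting the explicit value of $\mu$ and tracking constants gives a bound of the stated form $2\diamX\popcard_\pidx\tfrac{\mdset}{\rho}$.

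\emph{Main obstacle.} The crux, used in both parts, is the depth estimate $d(\cc_\pidx,\rbd\X_\pidx)\geq\rho-\mdset\;(\geq\tfrac\rho2)$. By concavity of $x\mapsto d(x,\rbd\X_\pidx)$ it suffices to show that each $\zz_\n$ ($\n\in\N_\pidx$) is $(\rho-\dset_\pidx)$-deep in $\X_\pidx$. Here I would argue that for any $u$ with $\|u-\zz_\n\|<\rho-\dset_\pidx$ one has $d(u,\rbd\X_\n)>\dset_\pidx$, so $u$ lies in the erosion of $\X_\n$, which \Cref{lm:FYAt}.ii) places inside $\X_\pidx$. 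The delicate point---and the step I expect to require the most care---is that this controls only directions lying in $\aff\X_\n$, whereas the relative boundary of $\X_\pidx$ is taken in the possibly larger space $\aff\X_\pidx$; the estimate is salvaged precisely by the affine-hull condition $\aff\X_\n\subseteq\aff\X_\pidx$ of \Cref{assp:Xnandun}.i), which, combined with the Hausdorff bound $d_H(\X_\n,\X_\pidx)\le\dset_\pidx$, pins down the transverse extent of $\X_\pidx$ near $\zz_\n$ and guarantees the full $(\rho-\dset_\pidx)$-margin in $\aff\X_\pidx$. Once this geometric estimate is in hand, the two displacement bounds follow mechanically from the choice of $\mu$ as above.
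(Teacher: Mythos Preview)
Your Part~(i) is essentially the paper's argument with a cosmetic variation: the paper interpolates each $\ww_\n$ toward $\zz_\n$ individually (so $\ww_\n=(1-t)\xx_\pidx+t\zz_\n$ with $t=2\mdset/\rho$), whereas you interpolate toward the group average $\cc_\pidx=\bpsi_\pidx(\zz)$. Both choices yield the same displacement bound and the same feasibility argument via \Cref{lm:FYAt}.i).

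Your Part~(ii), however, takes a genuinely different route from the paper, and this is where your sketch develops gaps. The paper does \emph{not} average first: it pushes each $\xx_\n$ toward $\zz_\n$ \emph{inside} $\X_\n$, setting $\yy_\n=(1-t)\xx_\n+t\zz_\n$ with $t=\mdset/\rho$; then $d(\yy_\n,\rbd\X_\n)\geq t\rho=\mdset$ by convexity, so $\yy_\n\in\X_\pidx$ by \Cref{lm:FYAt}.ii), and finally $\ww\eqd\bpsi(\yy)\in\FX\epopset(\Aset)$. This order of operations has two advantages over yours. First, the push happens entirely within $\X_\n$, where $\xx_\n$ already lives, so there is no ``$O(\mdset)$ defect'' to absorb and no need for the concavity-of-depth argument you invoke for a point possibly outside $\X_\pidx$. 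Second---and more importantly---the paper's Part~(ii) never needs the depth of $\zz_\n$ (or $\cc_\pidx$) in $\X_\pidx$; it only uses the depth of $\zz_\n$ in $\X_\n$, which is exactly what \Cref{lm:intprofileAt} provides. Your route makes the whole of Part~(ii) hinge on the ``main obstacle'' estimate $d(\cc_\pidx,\rbd\X_\pidx)\geq\rho-\mdset$, which the paper's Part~(ii) does not require at all. Separately, your Part~(ii) never fixes $\mu$ and the claim that ``tracking constants'' produces the factor $\popcard_\pidx$ is unjustified: with your $\ww_\pidx$ the displacement is $\mu\|\cc_\pidx-\bpsi_\pidx(\xx)\|\leq 2\mu\diamX$, which carries no $\popcard_\pidx$ (in the paper the $\popcard_\pidx$ enters because one sums $\|\zz_\n-\xx_\n\|$ over $\n\in\N_\pidx$ before dividing).

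On your ``main obstacle'': you correctly flag that the inclusion $\aff\X_\n\subset\aff\X_\pidx$ may be strict, but the resolution you sketch is wrong. The Hausdorff bound does \emph{not} guarantee a $(\rho-\dset_\pidx)$-margin in the transverse directions of $\aff\X_\pidx$. Concretely, take $\X_\pidx=[-1,1]\times[-\epsilon,\epsilon]$ and $\X_\n=[-1,1]\times\{0\}$; then $d_H(\X_\n,\X_\pidx)=\epsilon$, $\zz_\n=(0,0)$ has depth $1$ in $\X_\n$, yet $d(\zz_\n,\rbd\X_\pidx)=\epsilon$, not $1-\epsilon$. So the estimate $d(\cc_\pidx,\rbd\X_\pidx)\geq\rho-\mdset$ fails in general when the affine hulls differ. (The paper's Part~(i) asserts the analogous inequality for $\zz_\n$ without proof, so the subtlety is not unique to your write-up; but note again that the paper's Part~(ii) sidesteps it entirely.)
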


Let $\ww\in \FX(\Aset)$ be s.t. $\forall \n \in \N $, $\|\ww_\n  - \psi_\n (\xx)\| \leq 4 \diamX \mdset/\rho$  (cf. \Cref{lem:dist_agg_genized_sets_at}). Since $\sxx$ (resp $\xx$) is an SVWE of $\GA'$ (resp. $\tG(\Aset)$), there exists $\g(\sxx) \in \corProdGradf(\sxx)$  (resp. $\big(\bh_\pidx(\xx)\big)_{\pidx\in\popset}\hh\hh \in\hh \prod_\pidx \partial_1 f_i(\xx_\pidx,\bxxag)$). 
\begin{equation*}
\txt\sum_{\n} \langle \g(\sxx)  , \ \sxx_\n   - \ww_\n   \rangle \leq 0 \ \text{ and } \txt \sum_{\pidx}\hh  \popcard_\pidx \langle \bh_\pidx(\xx), \xx_\pidx - \yy_\pidx\rangle \leq 0.
\end{equation*}  
for all $\yy \in \X\epopset(\Aset)$. 
Moreover, for each $ \pidx$ and 
$  \n  \in \N_\pidx$, by  definition of $\duti_\n $, let $\bm{r}_\n(\xx) \in \partial_1 f_n(\xx_\pidx,\bxxag)$ s.t. $ \|\bm{r}_\n(\xx) \hh - \hh\bh_\pidx(\xx)\|\leq \duti_\n $.
Then, $\langle \g(\sxx) - \bm{r}(\xx) ,  \sxx-  \psi( \xx) \rangle$ equals to:
\renewcommand{\-}{\hspace{-2pt}-\hspace{-2pt}}
\begin{equation*}
 \langle \g(\sxx)  ,  \sxx-  \ww \rangle + \langle \g(\sxx)  ,    \ww - \psi( \xx)\rangle    + \langle \bm{r}(\xx) ,    \psi( \xx) -\sxx \rangle.
 \end{equation*}
 The first term is nonpositive because $\sxx$ is SVWE. By definition of $\ww$, the second term is bounded by $ L_1 N 4 \diamX \mdset/\rho$. 
 Let us decompose the third term,  using
a  $\yy\in \FX\epopset(\Aset)$ s.t. $\forall \pidx$, $\|\yy_\pidx-\bpsi_\pidx(\sxx)\|_{\popset}\leq 2 \diamX \popcard_\pidx \mdset/\rho$ (cf. \Cref{lem:dist_agg_genized_sets_at}), into:
\begin{align*}
 \langle \bm{r}(\xx) \- \bh(\xx),  \psi( \xx) \- \sxx  \rangle  +\hh \langle \bh(\xx),  \psi( \xx) \- \yy  \rangle \hh +\hh \langle \bh(\xx),  \yy \-\sxx  \rangle.
 \end{align*}
The second term is nonpositive because $\xx$ is an SVWE of $\tG(\Aset)$. The first term is bounded by $2 NR \mduti$. The third and last term is bounded by $L_1 2 R N \mdset/\rho$. 
We conclude by using the strong monotonicity of $ \corProdGradf$, which gives for case i):
\begin{align*}
  \alpha \norm{\sxx-\psi(\xx) }^2 & \leq \langle \g(\sxx) - \bm{r}(\xx) ,  \sxx-  \psi( \xx) \rangle \\
                                  & \leq 0+   L_1 N 4 \diamX \mdset/\rho  + 2 NR \mduti + L_1 2 R N \mdset/\rho \ ,
\end{align*}
 and similarly, for case ii):
\begin{align*}
  \beta N \norm{\bsxxag-\bxxag }^2 & \leq \langle \g(\sxx) - \bm{r}(\xx) ,  \sxx-  \psi( \xx) \rangle \\
                                  & \leq 0+   L_1 N 4 \diamX \mdset/\rho  + 2 NR \mduti + L_1 2 R N \mdset/\rho\ .
\end{align*}
\end{proof}
\begin{remark}
The bound  \eqref{eq:boundClusterIndiv} given on the \emph{individual profiles} diverges with the number of players $N$.
 This is a consequence of the fact that individual  errors, $\| \xx_n - \xx_i^*\|$ for each $n$ within a population $\N_i$, may accumulate, and this accumulation  is captured by the Euclidean norm, as $\norm{ \mathds{1}_N}= \sqrt{N}$. 
\end{remark}
We have pointed out that the approximation error depends on how the populations are clustered according to $\popset$, and is related to the heterogeneity of players in $\N $ rather than their number. In particular, in the case of \Cref{ex:homogenousPops}, \Cref{th:estimNE:main} states that the (aggregate) SVWE of the auxiliary game $\tG\epopset(\Aset)$ is exactly equal to the (aggregate) SVWE of the large game $\G(\Aset)'$.

\subsection{Combining the two steps to approximate a VNE of $\GA$}

The following theorem is the main result of the paper, which is immediately obtained as the combination of the two steps of approximation given in \Cref{th:pseudoVNE} and in \Cref{th:estimNE:main} in the computation of a VNE of the original game $\GA$.
\begin{corollary}[SVWE of $\tG\epopset(\Aset)$ is close to VNEs  of $\G(\Aset)$]\label{th:bound}
Under \Cref{assp_convex_costs,assp:Xnandun},  in an auxiliary game $\tG\epopset(\Aset)$, $\mdset$ and $\mduti$ are defined by \Cref{eq:def_dsetAtom,eq:def_dutAtom}, with $\mdset <\frac{\rho}{2}$.
 Let $\sxx$ be an SVWE of $\tG\epopset(\Aset)$, $\xx\in \FX(\Aset)$ be a VNE of  $\GA$, $\bsxxag =\frac{1}{N} \sum_{\pidx\in \popset}\popcard _\pidx \sxx_\pidx$, $\bxxag=\frac{1}{N}\sum_{\n\in \N } \xx_\n $, and $K(\mdset,\mduti) $ the constant given by \eqref{eq:KboundExpression}.
\begin{enumerate}[wide,label=\roman*),labelindent=4pt]
\item if $\corProdGradf$ is $\stgccvut$-strongly monotone, then $\sxx$ is unique and
\vspace{-0.15cm}
 \begin{align*}
   \|\psi(\sxx)-\xx\| &\leq  \tfrac{L_2}{\stgccvut}\frac{1}{\sqrt{N}} + \sqrt{N \tfrac{ K\left(\mdset,\mduti\right) }{\stgccvut}  } \ .
 \end{align*}
 {From this  bound and  Cauchy-Schwartz inequality, we also have:}
  \begin{align*}
  \tfrac{1}{N}\sum_\n\|\psi_\n(\sxx)-\xx_\n\|  &\leq  \tfrac{L_2}{\stgccvut N} +\sqrt{\tfrac{ K\left(\mdset,\mduti\right) }{  \stgccvut}  }\, \\
 \text{ and }
  \left\| \bxxag- \bsxxag \right\| & \leq \tfrac{L_2}{\stgccvut N}+\sqrt{\tfrac{ K\left(\mdset,\mduti\right) }{  \stgccvut}  }\, ;
\end{align*}
\item  if $\corProdGradf$ is $\beta$-aggr. str. monotone,  then  $\bsxxag$%
 is unique,  and
 \vspace{-0.15cm}
\begin{equation*}
 \left\| \bxxag- \bsxxag \right\|  \leq \diamX \sqrt{ \tfrac{2 TC}{ N \beta } } +\sqrt{\tfrac{ K\left(\mdset,\mduti\right) }{ \beta}  }\, .
\end{equation*} 
\end{enumerate}
\end{corollary}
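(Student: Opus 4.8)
The plan is to obtain the statement as a one-line consequence of the two approximation theorems already proved, glued together by the triangle inequality, with an SVWE of the intermediate full-dimension population game $\G(\Aset)'$ acting as the bridge. First I would invoke \Cref{prop:exist_ve} to fix an SVWE $\syy$ of $\G(\Aset)'$; under $\stgccvut$-strong monotonicity of $\corProdGradf$ this $\syy$ is the \emph{unique} SVWE of $\G(\Aset)'$, and under $\beta$-aggregative strong monotonicity its aggregate $\byyag^{*}\eqd\tfrac1N\sum_\n\syy_\n$ is unique, by \Cref{th:unique_vnevwe}. This uniqueness is what makes the gluing legitimate: it guarantees that the SVWE of $\G(\Aset)'$ featuring in \Cref{th:pseudoVNE} is the \emph{same} object as the one featuring in \Cref{th:estimNE:main}, so that the two error bounds may simply be added.

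Next I would write
\[
\|\psi(\sxx)-\xx\|\ \le\ \|\psi(\sxx)-\syy\|\ +\ \|\syy-\xx\|,
\]
together with the analogous inequalities for the averaged-individual quantity $\tfrac1N\sum_\n\|\cdot\|$ and for the aggregate profiles, and then bound each term. For the second term, \Cref{th:pseudoVNE} applied to the VNE $\xx$ of $\GA$ and the SVWE $\syy$ of $\G(\Aset)'$ gives $\|\syy-\xx\|\le\tfrac{L_2}{\stgccvut}\tfrac1{\sqrt N}$ in case i) and $\|\byyag^{*}-\bxxag\|\le\sqrt{2\diamX L_2/(\beta N)}$ in case ii). For the first term, \Cref{th:estimNE:main} applied to the auxiliary game $\tG\epopset(\Aset)$ and the population game $\G(\Aset)'$ — whose hypotheses, namely \Cref{assp:Xnandun} and $\mdset<\rho/2$, are exactly those assumed in the corollary — gives $\|\psi(\sxx)-\syy\|\le\sqrt{N\,K(\mdset,\mduti)/\stgccvut}$ in case i) and $\|\bxxag-\byyag^{*}\|\le\sqrt{K(\mdset,\mduti)/\beta}$ in case ii), with $K$ as in \eqref{eq:KboundExpression}. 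Adding the two bounds, and deriving the averaged-individual and aggregate variants from Cauchy--Schwarz exactly as in the proofs of \Cref{th:pseudoVNE,th:estimNE:main}, produces every inequality in the statement.

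No mathematical difficulty remains — the substance is in \Cref{th:pseudoVNE,th:estimNE:main} — so the main obstacle is bookkeeping, of three kinds. First and most load-bearing is ensuring the same intermediate $\syy$ is used in both steps, which is exactly why one must first record the uniqueness of the SVWE of $\G(\Aset)'$ (or of its aggregate). Second, the overloaded notation must be matched carefully: $\sxx$ denotes the reduced-game SVWE in the corollary but the $\G(\Aset)'$-SVWE in \Cref{th:estimNE:main}, so the roles have to be substituted consistently. Third, in case ii) only the aggregate quantities are controlled — individual profiles of the VNE and of the SVWE of $\G(\Aset)'$ need not be close under mere aggregative strong monotonicity — which is why case ii) only bounds $\|\bxxag-\bsxxag\|$; and expressing the constant inherited there from \Cref{th:pseudoVNE}.ii) in the form $\diamX\sqrt{2TC/(N\beta)}$ is just a matter of bounding $L_2$ by the size/Lipschitz data of the congestion-game model.
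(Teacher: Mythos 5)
Your proposal is correct and matches the paper's own (essentially unstated) argument: the corollary is obtained by chaining \Cref{th:pseudoVNE} and \Cref{th:estimNE:main} through the SVWE of $\G(\Aset)'$ via the triangle inequality, with the uniqueness statements of \Cref{th:unique_vnevwe} guaranteeing that the same intermediate object (or, in case ii), the same aggregate) appears in both bounds. Your handling of the case-ii) restriction to aggregates and of the constant in \Cref{th:pseudoVNE}.ii) is consistent with the paper.
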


Given the large game $\G(\Aset)$ and a certain $\popcard \in \nit^*$, \Cref{th:bound} suggests that we should find the auxiliary game $\tG\epopset$ with $\I=\{1, \dots , \popcard \}$  that minimizes $K(\mdset,\mduti) $ in order to have the best possible approximation of the equilibria. This would correspond to a ``clustering problem'' given as follows:
\begin{equation} \label{eq:clusteringMinimizeK}
\min_{(\N_\pidx)_\pidx \in \mathcal{P}^\popcard(\N ) }\  \min_{(\X_\pidx)_{\pidx}} \ \min_{(u_\pidx)_{\pidx}} K(\mdset,\mduti) ,
\end{equation}
where $\mathcal{P}^\popcard(\N)$ denotes the set of all partitions of $\N $ of cardinal $\popcard$, while $(\X_\pidx)_{\pidx}$ and $(u_\pidx)_\pidx $ are chosen according to \Cref{assp:Xnandun}.%

The value of the optimal solutions of problem \eqref{eq:clusteringMinimizeK}, and thus of the quality of the approximation in \Cref{th:bound}, depends on the homogeneity of the $N$ players in $\N $ in terms of action sets and utility functions.
 The ``ideal'' case is given in \Cref{ex:homogenousPops} where $\N $ is composed of a small number $\popcard$ of homogeneous populations and thus $K(\mdset,\mduti)=0$.

Solving \eqref{eq:clusteringMinimizeK} is a hard problem in itself: it  generalizes the $k$-means clustering problem \cite{lloyd1982least}  (with $k=\popcard$ and considering a function of Hausdorff distances), which is  NP-hard \cite{garey1982complexity}.
 In  \Cref{sec:appliDRapprox}, we illustrate how we  use directly the $k$-means algorithm to compute efficiently an approximate solution $(\N_\pidx, \X_\pidx, u_\pidx)_{\pidx\in\popset}$ of \eqref{eq:clusteringMinimizeK} in the parametric case.

\smallskip
Finally, the number $\popcard$ in the definition of the auxiliary game should be chosen a priori  as a trade-off between the minimization of $K(\mdset,\mduti)$ and a sufficient minimization of the dimension. 
  Indeed, with $\popset = \N $, $\X_\n =\X_\pidx$ and $u_\pidx=u_\n $, we get $\mduti=\mdset=0$. 
However, the aim of \Cref{th:bound} is to find an auxiliary game $\tG\epopset$ with $\popcard \ll N$ so that the dimension of the GVIs characterizing the equilibria (and thus the time needed to compute their solutions) is significantly reduced, while ensuring a relatively small error, measured by $\mduti$ and $\mdset$.

\section{Application to Demand Response and Electricity Flexibilities}%
\label{sec:appliDRapprox}

Demand response (DR) \cite{ipakchi2009} refers to the techniques to optimize the electric consumption of distributed consumers %
 \cite{PaulinTSG17}. 
The increasing number of  electric vehicles (EV) offers a new source of flexibility in the optimization of the production and demand. %
To deal with privacy and distributed decisions and information issues, many works consider game theoretical approaches (considering consumers as players) for the management of consumption flexibilities \cite{saad2012game}.
Let us consider the charging of EVs on a set of 24-hour time-periods  $\T\eqd \{1,\dots,T\}$, with $T=24$, indexing the hours from 10 \textsc{pm} to $9$\textsc{pm} the day after (to include night time periods). %

\paragraph{Price functions: block rates energy prices}

The operator imposes electricity prices on each time-period, given as \emph{inclining block-rates tariffs} \cite{wang2017optimal}, 
i.e. a piece-wise affine function $c(\cdot)$ which depends on the aggregate demand  $X_t \eqd \sum_{\n} \x\nt$, where $\x\nt$ is the energy used by consumer $n$ during time period $t$ (one can easily transform it to a function of the average action $\overline{X}_t$ by scaling  coefficients):
\begin{equation}\label{eq:price}
\begin{split}
c(X)&= 1 + 0.1 X  \text{ if } X \leq 500 \\
  c(X)&=-49 + 0.2 X \text{ if } 500 \leq X \leq 1000 \\
  c(X) &=-349 + 0.5 X \text{ if } 1000 \leq X    \ .
\end{split}
\end{equation}
This function $c$ is continuous and convex. Prices are transmitted by the operator to each  consumer (EV owner) $ \n $ who minimizes, with respect to $\xx_n$, a cost function  of the \emph{congestion} form \eqref{eq:cost_player_def} with an energy cost determined by \eqref{eq:price} and a utility function $u_\n $ defined below. An equilibrium gives a stable situation where each consumer minimizes her cost and has no interest to deviate from her current profile. 

\paragraph{Consumers' constraints and parameters} \label{subsec:consumerParams}

We consider a set of $N=2000$ consumers who have demand constraints of the form:\begin{equation}
\label{eq:set_Edemand}
\hh \hh  \resizebox{0.91\columnwidth}{!}{$\X_\n \hh\eqd\hh \{ \xx_\n  \in \rit^T_+  :  \txt\sum_\t  \x\nt\hh =\hh E_\n  \text{ and }  \ux\nt \leq \x\nt \leq \ox\nt\}$}
\end{equation} 
where $ E_\n $ is the total energy needed by $ \n $, and $\ux\nt ,\ox\nt$  (physical) bounds on the power allowed at time $t$.  
The utility functions have the form 
$u_\n (\xx_\n ) \eqd - \omega_\n  \norm{\xx_\n -\yy_\n  }^{2} $, with $\yy_n$ a preferred charging profile.
Simulation parameters  are chosen as follows:
\begin{itemize}[wide]
\item $ E_\n $ is drawn uniformly between 1 and 30 kWh, which corresponds to a typical charge of a residential EV.
\item $\uux_\n , \oox_\n $:
 First, we generate, in two steps,  a set  of \emph{successive} charging time-periods $\T_\n =%
\{ h_\n  - \frac{\tau_\n }{2} , \dots , h_\n  + \frac{\tau_\n }{2} \}$; %
\begin{itemize}
\item the duration $\tau_\n $ is uniformly drawn from $ \{4,\dots,T \}$;
\item $h_\n $ is then uniformly drawn from $\{ 1+\frac{\tau_\n }{2} , \dots , T- \frac{\tau_\n }{2}\}$. 
\end{itemize}
Next, for $t\notin \T_\n $, let $\ux\nt=\ox\nt=0$ and, for $t\in \T_\n $, $\ux\nt$ (resp.  $\ox\nt$) is drawn uniformly from $[0,\frac{ E_\n }{\tau_\n }]$ (resp. $[\frac{ E_\n }{\tau_\n }, E_\n ]$).
\item $\omega_\n $ is drawn uniformly from $[1,10]$.
\item $\yy\nt$ is taken equal to $\ox\nt$ on the first time periods of $\T_\n $ (first available time periods) until reaching $ E_\n $ (consumption profile %
 ``plug and charge'').
\end{itemize}

We consider the following \emph{coupling constraints}:%
\begin{align}
\label{cst:ramp}
-50 \leq \xag_{T}-\xag_1 \leq 50 \\
\label{cst:capa} \xag_t \leq 1400 , \quad \forall \t \in \T
\end{align}
Constraint \eqref{cst:ramp} is a periodicity constraint which imposes that the aggregate demand $X_T$ at the end of time horizon $\T$ is  close to the aggregate demande  $X_1$ at the start of the horizon: the idea behind this constraint is to ensure that the production schedule computed for horizon $\T$ can be applied on a day-to-day, periodical basis.
Constraint \eqref{cst:capa} is a capacity constraint, 
 (e.g. induced by production capacities).%
 \newif\ifAlgoProj
\AlgoProjfalse
\ifAlgoProj
Let us write the coupling constraints  in the closed form:
$ \Amat \xxag \leq \bb , $
where $\Amat $ is a real matrix of size $ (T+2) \times T$  and $\bb\in \rit^{T+2}$.
\else
\fi
\paragraph{Computing populations with $k$-means}

To apply the clustering procedure of \Cref{subsec:class}, we use the $k$-means algorithm \cite{lloyd1982least}, where ``$k$''$=\popcard$ is the number of populations (groups) to replace the large set of $N$ players.
 For each player $ \n $, we define her parametric description vector:
\begin{equation}
\paramvec_\n  \eqd [ \omega_\n , \yy_\n , E_\n , \uux_\n , \oox_\n  ] \in \rit^{3T+2}\ .
\end{equation}
The $k$-means algorithm finds a partition $(S_\pidx)_{1\leq \pidx \leq \popcard}$ of $\N $ into $\popcard$ clusters, approximately solving the combinatorial problem:
\begin{equation*}
\hh \min_{S_1,\dots, S_\popcard} \hspace{-2pt} \sum_{1\leq \pidx \leq \popcard} \sum_{\paramvec \in S_\pidx} \hh \norm{ \mathbb{E}_{S_\pidx}\hspace{-2pt}(\paramvec) - \paramvec   }^2 \hh = \hspace{-5pt}\min_{S_1,\dots, S_\popcard}  \hh \hspace{-1pt} \sum_{1\leq \pidx \leq \popcard} \hspace{-6pt} |S_\pidx|  \mathbb{V}\text{ar}(S_\pidx) ,
\end{equation*}
where $\mathbb{E}_{S_\pidx}(\paramvec)=\frac{1}{|S_\pidx|}\sum_{\n\in S_\pidx} \paramvec_\n $ denotes the average value of $\paramvec$ over the set $S_\pidx$, taken to define $w_\pidx$, $\yy_\pidx$, $E_\pidx$, $\uux_\pidx$ and $\oox_\pidx$.

The simulations are run with different population numbers, with $\popcard  \ll N$  chosen among $\{5,10,20,50,100\}$.

\medskip 
The $k$-means algorithm enables to efficiently find  a heuristic and approximate solution of \eqref{eq:clusteringMinimizeK}. This algorithm does not search for the optimal choices of $(u_i)_i$ and $(f_i)_i$, as in the statement of problem  \eqref{eq:clusteringMinimizeK}.
Indeed, as the algorithm minimizes the squared distance of the average vector of parameters in $S_\pidx$ to the vectors of parameters of the points in $S_\pidx$, the solution obtained can be sub-optimal in terms of  $K(\mdset, \mduti)$.
Finding efficient methods to address  problem  \eqref{eq:clusteringMinimizeK} is an avenue for further research.

\paragraph{Computation methods}

We compute a VNE (\Cref{def:ve-finite}) with the original set of $N$ players and the approximating SVWE (\Cref{def:pseudoVNE}) as solutions of the associated GVI \eqref{cond:ind_opt_ve}.

We employ the iterative projection method  exposed in \cite[Algo. 2]{gentile2017nash}, adapted to our subdifferentiable case
by replacing the fixed step $\tau$ used in with a variable step $\tau^{(k)}=1/k$, in the spirit of subgradient algorithms  \cite{cohen1988auxiliary,auslender2009projected}.
\ifAlgoProj
 The principle of this algorithm is to relax the coupling constraint %
  and to consider the Lagrangian multipliers $\bm{\lambda} \in \rit_+^{T+2}$ associated to these constraints as extra variables, and to consider the \emph{extended} operator $T: \FX\epopset \times \rit_+^{T+2}  \rightrightarrows \rit^{IT} \times \rit^{T+2}$ defined as:
 \begin{equation*}
 T(\xx, \llam) = \begin{pmatrix}
 \big( \partial_1 f_\pidx(\xx_\pidx,\xxag) + \llam^\tr \Amat\big)_{i\in\I} \\
-( \Amat \xxag - \bb) \ 
 \end{pmatrix}
\end{equation*}  
on which we can apply a projected subgradient algorithm.  
 The advantage is that we can perform the projections on the sets $(\X_\n)_{n\in\N} $ and on $\rit_+^{T+2}$ as shown below:
\begin{algorithm}[H]
\begin{algorithmic}[1] %
\Require ${\xx}^{(0)}, \bm{\lambda}^{(0)}$, stopping criterion %
\State $k \leftarrow 0$  \;%
\While%
{stopping criterion not true}
\For{$\pidx =1$ to $\popcard$} \label{line:algo-for}
\State take $g_\pidx^{(k)} \in \partial_1 f_\pidx(\xx_\pidx^{(k)},\xxag^{(k)}) $ \; 
\State $\xx_\pidx^{(k+1)} \leftarrow \Proj_{\X_\pidx}\left(\xx_\pidx^{(k)}-\tau^{(k)}(g_\pidx^{(k)} + {\bm{\lambda}^{(k)}}^\tr \Amat ) \right) $ %
\label{line:algo-proj}\;
\EndFor
\State $\bm{\lambda}^{(k+1)} \hh\leftarrow \left( {\bm{\lambda}^{(k)} \hh-\hh \tau^{(k)} (b-2 \Amat\xxag^{(k+1)} + \Amat \xxag^{(k)} )} \right)^+$ \;
\State $  k \leftarrow k+1 $ \;
\EndWhile
\end{algorithmic}
\caption{Projected Descent Algorithm }
\label{algo:PGDFS}
\end{algorithm}

The stopping criterion that we adopt here is the distance between two iterates: the algorithm stops when $\|  (\lambda^{(k+1)},\xx^{(k+1)})-(\lambda^{(k)},\xx^{(k)}) \|_2 \leq 10^{-3}$.
Although the algorithm converges for this criterion in practice in our numerical experiments, the general convergence of \Cref{algo:PGDFS} is not proven theoretically.
Proving the convergence of a projected subgradient algorithm  for a general game, under \Cref{assp_convex_costs} and  assuming $\xx \mapsto  \big(\partial_1 f_\pidx(\xx_\pidx,\xxag)\big)_{ i\in\I }$ to be (strongly) monotone, is out of the scope of this paper, but would constitute an interesting path for further research.

Due to the form of the strategy sets considered \eqref{eq:set_Edemand}, the projection steps (Line 5) can be computed efficiently and exactly in $\mathcal{O}(T)$ with the Brucker algorithm \cite{brucker1984n}. However, if we consider more general strategy sets (arbitrary convex sets), this projection step can be costly: in that case, other algorithms such as \cite{fukushima1986relaxed} would be more efficient.                    
\else 
The algorithm stops when the Euclidean distance between two iterates is smaller than $10^{-3}$.
We refer to \cite{gentile2017nash} for more details on the algorithm.
\fi
\paragraph{A trade-off between precision and computation time}

Simulations were run using Python on a single core Intel Xeon @3.4Ghz and 16GB of RAM. 
In this simple example, the computation of the actual VNE $\hxx$ (and aggregate profile $\hxxag$) of the original game with $N=2000$ players is still possible, and is used to measure the precision of our approximations.
\begin{figure}[htb]
\hspace{-8pt}
     \centering
     \subfloat[][]{\includegraphics[width=0.48\columnwidth]{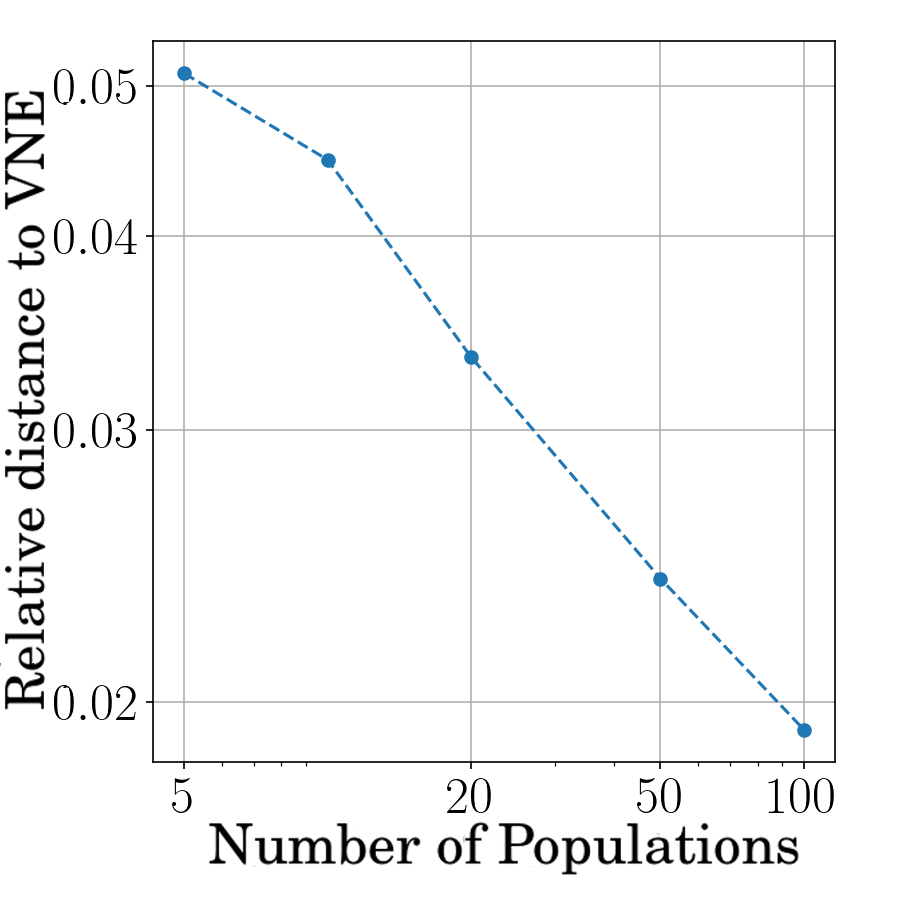%
         }\label{fig:cvg_pne_dists} }
     \subfloat[][]{\includegraphics[width=0.485\columnwidth]{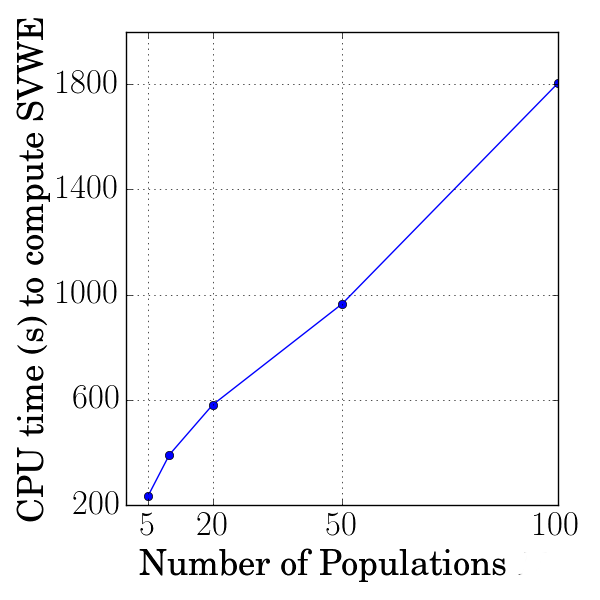}
     \label{fig:cvg_pne_CPU}}
     \caption{(a) Relative error to VNE ; (b) SVWE CPU Time.%
     }
\end{figure}

\Cref{fig:cvg_pne_dists} and \Cref{fig:cvg_pne_CPU} show  the two main metrics  to consider to choose a relevant number of populations $\popcard$: the precision of the SVWE approximating the equilibrium 
(measured by the relative distance of the  aggregate SVWE profile $\sxxag$ to the aggregate VNE  profile $\hxxag$  computed along, $\| \sxxag-\hxxag\|  /  \| \hxxag \| $), and the CPU time needed to compute the SVWE. 
We can observe from \Cref{fig:cvg_pne_dists} that the distance between the aggregate equilibrium profile and its estimation decreases with $\popcard$ at a rate of $\mathcal{O}(1/\popcard^a)$ where $a$ is estimated numerically as $a \approx 0.37$.
This is explained in light of \Cref{th:estimNE:main} which gives a bound depending on $\sqrt{K(\mdset,\mduti)}$ and in addition with the following remarks, which give the intuition that both $\mdset$ and $\mduti$ should evolve as $\mathcal{O}(1/\popcard)$ if parameters among players in $\N$ are distributed uniformly:
\begin{itemize}[wide]
\item the Hausdorff distance between  parameterized polyhedra is Lipschitz continuous w.r.t their parameter vectors (see  \cite{batson1987combinatorial});
\item similarly, as subgradients of utility functions are reduced to a point, one has, for all $n$,
\begin{align*}
\duti_\pidx = \max_{\n\in \N_\pidx} \max_{\xx \in \M} 2 \norm{ \omega_\pidx (\xx  - \yy_\pidx)- \omega_\n  (\xx-\yy_\n )} \\
 = \mathcal{O}\left(\max_{\n  \in\N_\pidx}  | \omega_\pidx-\omega_\n | + \norm{ \yy_\pidx- \yy_\n } \right) \ .
\end{align*}
\end{itemize}

\Cref{fig:cvg_pne_CPU} shows the CPU time needed to compute 
the SVWE with our stopping criterion.
The CPU time evolves linearly with the number of populations $\popcard$, which is  explained by the \emph{distributed} structure of the algorithm, requiring a number of iterations proportional to $\popcard$. 
The computation of  a solution of the clustering problem with the $k$-means algorithm is negligible (less than ten seconds  for each value of $\popcard$).

Computing the VNE of the original game with the same configuration and stopping criterion took  3 hours and 26 minutes, which is more than six times the time needed to compute the SVWE with one hundred populations.

Last, the error between the aggregate demand profile at equilibrium  and its approximation (\Cref{fig:cvg_pne_dists}) is between 2\% and 5\%, which remains significant. 
However, as pointed out in \Cref{sec:approxRes}, the quality of the approximation depends on the heterogeneity of the set of players $\N $:
as the parameters are drawn uniformly in this experiment, the set of players $\N $ presents a large variance, and provides us with a ``worst'' case (as opposed to  \Cref{ex:homogenousPops} which offers an optimal situation).

\section{Conclusion}

This paper shows that equilibria in large splittable aggregative games  can be approximated with a Wardrop Equilibrium of an auxiliary population game of smaller dimension. 
Our results give explicit bounds on the distance of this approximating equilibrium to the equilibria of the original large game. 
These theoretical results can be used in practice to solve, by an iterative method, complex nonconvex bilevel programs where the lower level is the equilibrium of a large aggregative game, for instance, to optimize tariffs or tolls for the operator of a network.
The analysis of such a procedure will constitute an interesting extension of this work.
Besides, this study highlighted the need for efficient algorithms for solving  clustering problems of type \eqref{eq:clusteringMinimizeK}, as well as distributed algorithms for solving GVI problems without a strong monotonicity assumption \cite{gentile2017nash,cohen1988auxiliary}. Those algorithmic considerations are also interesting avenues for further research.

\appendices

\crefalias{section}{appsec}

\section{Proof of \Cref{lm:intprofileAt}: Existence of interior profile}
\label{app:proof:lm-intprofile}

Let $\check{\xx}\in \FX$ be such that for all $ \n $:
\vspace{-0.15cm}
\begin{equation*}
d(\check{\xx}_\n , \rbd \X_\n )=\txt \max_{\xx\in \X_\n }d(\xx,\rbd \X_\n ) \eqd \eta_\n .
\vspace{-0.15cm}
\end{equation*}  
Denote $\bar{\check \xxag}\hh =\hh \tfrac{1}{N}\sum_\n  \check{\xx}_\n $ and 
$\eta \hh=\hh \min_{\n} \eta_\n >0$. 
Let $\yy\in\FX(\Aset)$ and $\byyag=\tfrac{1}{N}\sum_\n  \yy_\n $ be s.t.:
\begin{equation*} 
d(\byyag, \rbd \Aset)=\txt\max_{\bxxag\in \Sxag\cap \Aset}d(\bxxag,  \rbd \Aset) \ . \end{equation*}
Let us denote $t= {d(\byyag, \rbd \Aset)}/{3 \diamX}$ and  let us define $\zz =\yy - t(\yy -\check{ \xx})\in \FX$  and  $\bzzag=\tfrac{1}{N}\sum_\n  \zz_\n $.
We obtain:
\begin{equation*}
\|\byyag- \bzzag\|=t\|\byyag-\bar{\check \xxag}\|\leq t 2 \diamX\leq \tfrac{2}{3}d(\byyag, \rbd \Aset) \ ,
\end{equation*}  
hence $\bzzag \in \Sxag\cap \rlt  \Aset$, where $\rlt$ means the relative interior. 
Besides, for any $\n $, $\zz_\n  =\yy_\n  - t(\yy_\n  - \check{\xx}_\n ) $. Since $d(\check{\xx}_\n , \rbd \X_\n )\geq \eta$, $\yy_\n  \in \X_\n $, and $\X_\n $ is convex, we have:
\begin{equation*}
d(\zz_\n , \rbd \X_\n  ) \geq \eta t = \tfrac{\eta}{3 \diamX}d(\byyag, \rbd \Aset) \ .
\end{equation*}
We can conclude by defining $\rho \eqd \frac{\eta}{3 \diamX}d(\byyag, \rbd \Aset)$. 

\section{Proof of the lemmas of  \Cref{th:estimNE:main}}
\label{app:proof:main}

\begin{proof} \textit{\Cref{lm:FYAt}}
i) Suppose  $\xx \notin \X_\n $. Let $\yy \eqd \Proj_{\X_\n }(\xx)\neq \xx$. As $\yy \in \aff \X_\n  \subset \aff \X_\pidx $, then $\xx-\yy \in \aff \X_\pidx$. Let $\zz \eqd \xx +  \dset_\pidx \frac{ \xx-\yy}{\norm{\xx-\yy}}$. Then, $\zz \in \X_\pidx$  because $\norm{\zz - \xx } \leq \dset_\pidx$.
By the convexity of $\X_\n $ and the definition of $\yy$, we have $d(\zz,\X_\n )=  d(\xx,\X_\n )+ \dset_\pidx> \dset_\pidx$,  contradicting the fact that $\dset_\pidx \geq  d_H(\X_\n ,\X_\pidx)$.

The proof of ii) is symmetric and omitted.
\end{proof}

\begin{proof} \textit{\Cref{lem:dist_agg_genized_sets_at}}~\\
i) For $\xx \in \FX\epopset(\Aset)$, define $\ww\in \FX$ as follows: $\forall \pidx\in\popset$, $\forall \n \in \N_\pidx$, let $\ww_\n  \eqd \xx_\pidx + t(\zz_\n  - \xx_\pidx )$ where $\zz$ is defined in \Cref{lm:intprofileAt}, with $t\eqd 2\mdset/\rho<1$. %
On the one hand:
\begin{align*}
&\forall \pidx\in\popset,\ \forall \n \in \N_\pidx, \ d(\zz_\n , \rbd \X_\pidx) \geq \rho - \mdset  \quad \\
\text{ implies } \quad  & d(\ww_\n ,\rbd \X_\pidx) \geq t(\rho-\mdset)> t\rho/2 =\mdset \ ,
\end{align*} 
because each point in the ball with radius $t( \rho-\mdset)$ centered at $\ww_\n $ is on the segment linking $\xx_\pidx$ and some point in the ball with radius $\rho-\mdset$ centered at $\zz_\n $ which is contained in $\X_\n $.

 Thus, $\ww_\n  \in \X_\n $  $\forall \n \in \N_\pidx$ according to \Cref{lm:FYAt}.i). 
On the other hand, the linear mapping $S: \rit^{IT} \ni \bv \mapsto \tfrac{1}{N}\sum_{\n\in \N } \bv_\n $ maps the segment linking $\psi(\xx)$ and $\zz$ in $\FX(\Aset)$ to a segment linking $\bxxag=\tfrac{1}{N}\sum_\pidx \popcard_\pidx \xx_\pidx$ and $\bzzag$ in the convex $ \Aset$. We get:
 \begin{equation*}
 \txt\tfrac{1}{N}\sum_{\n\in \N } \ww_\n  = t\bzzag + (1-t)\bxxag \in \Aset 
 \end{equation*} 
and thus $\ww\in \FX(\Aset)$. 
Finally, $\|\ww_\n  - \psi_\n (\xx)\| =t\|\zz_\n -\psi_\n (\xx)\| \leq t2 \diamX=  4\diamX \tfrac{\mdset}{\rho}$.

ii) For $\xx\in \FX(\Aset)$, let $\yy\eqd \xx+t(\zz-\xx)$ with $t\eqd \frac{\mdset}{\rho}$. 
Then, by similar arguments as above, $d(\yy_\n , \rbd \X_\n ) \geq \mdset$ hence $\yy_\n \in \X_\pidx$ and $\bpsi(\yy)\in \FX\epopset$. 
From the convexity of $\Aset$:
\begin{equation*}
\txt\tfrac{1}{N}\sum_\n  \yy_\n =t\bzzag+(1-t)(\tfrac{1}{N}\sum_\n  \xx_\n ) \in \Aset \ . 
\end{equation*}
  Hence $\ww\eqd \bpsi(\yy)\in \FX\epopset(\Aset)$. Finally, $\|\ww_\pidx-\psi_\pidx(\xx)\|=t\|\sum_{\n\in \popcard_\pidx}(\zz_\n -\xx_\n )\|\leq 2 \diamX \popcard_\pidx\frac{\mdset}{\rho}$.
\end{proof}

\begin{small}
\bibliographystyle{IEEEtran}
\bibliography{../../../bib/shortJournalNames,../../../bib/biblio1,../../../bib/biblio2,../../../bib/biblio3,../../../bib/biblio4,../../../bib/biblioBooks}
\end{small}
\end{document}